\providecommand{\U}[1]{\protect\rule{.1in}{.1in}}
\newtheorem{theorem}{Theorem}[section]
\newtheorem{lemma}[theorem]{Lemma}
\newtheorem{proposition}[theorem]{Proposition}
\newtheorem{definition}{Definition}[section]
\theoremstyle{definition}
\theoremstyle{remark}
\numberwithin{equation}{section}
\let\pdfoutput=\undefined\fi
\begin{document}
\pagestyle{myheadings}

\begin{center}
	{\huge \textbf{The solvability conditions and exact solutions to some  quaternion tensor systems}}\footnote{This research was supported by  the National Natural
		Science Foundation of China (Grant no. 11971294). Email address: wqw@t.shu.edu.cn}

	\bigskip
	
	{ \textbf{Qing-Wen Wang$^{a,*}$, Mengyan Xie$^{a}$}}
	
	{\small
		\vspace{0.25cm}
		
		$a.$ Department of Mathematics, Shanghai University, Shanghai 200444, P. R. China}
\end{center}
\begin{quotation}
\noindent$\mathbf{Abstract}$:
We derive necessary and sufficient conditions for the existence of the exact solution to the  Sylvester-type quaternion tensor system
$
\mathcal{A}_i\ast_{N}\mathcal{X}_i+
\mathcal{Y}_i\ast_{M}\mathcal{B}_i+\mathcal{C}_i\ast_{N}
\mathcal{Z}_i\ast_{M}\mathcal{D}_i+\mathcal{F}_i\ast_{N}
\mathcal{Z}_{i+1}\ast_{M}\mathcal{G}_i=\mathcal{E}_i, i=\overline{1,3}
$
using Moore-Penrose inverse, and present an expression
of the general solution to the system when it is solvable.
As an application of this system, we provide the solvability conditions and general solutions for the Sylvester-type quaternion tensor system
$
\mathcal{A}_i\ast_{N}\mathcal{Z}_i\ast_{M}\mathcal{B}_i+
\mathcal{C}_i\ast_{N}\mathcal{Z}_{i+1}\ast_{M}\mathcal{D}_i=
\mathcal{E}_i, i=\overline{1,4}.
$
This paper can also serve as  extensions to some known results.
\newline%
\noindent\textbf{Keywords:} quaternion tensor equation; Einstein product; general solution\newline\noindent
\textbf{2020 AMS Subject Classifications:\ }{\small 15A09, 15A24, 15B33, 15B57}\newline
\end{quotation}

\section{Introduction}
About two hundred years ago the skew fields of quaternions was introduced by Sir William Rowan Hamilton\cite{wrh}, which has all properties of fields except for  the commutativity of multiplication. The theoretical knowledge  about quaternions is a good part of algebra; see, for example, \cite{lr}. Dating back to 1936 \cite{law}, the literature on matrices over quaternions is still fragmentary. Nevertheless, renewed interest and applications have been witnessed recently, such as signal and color image procession, quaternionic quantum mechanics (qQM) and many other fields  (eg. \cite{nlb}, \cite{jhc},  \cite{zhh2},  \cite{sdl},   \cite{cct1}-- \cite{qww7}, \cite{wms}, \cite{sfy},  \cite{fzz}).

Since 1952, the first generalized Sylvester matrix equation was studied by Roth. This seems to have stimulated several authors who have discussed the generalized Sylvester matrix equations and their applications, for instance, image processing \cite{sk}, eigenvalue assignment problems \cite{sb}, neural network \cite{ynz} and so on.
It is known that matrix version is the special form of tensor version. Tensor theory  has arisen lots of applications, such as signal processing (\cite{ldl1}--\cite{ldl4}), graphic analysis \cite{tgk1} and so on (eg. \cite{bwb}, \cite{rb1}, \cite{ckc}, \cite{dwy1}, \cite{dwy2}, \cite{gy1}--\cite{gy3}, \cite{zhh}, \cite{tgk}, \cite{lml}, \cite{qlq}--\cite{qlq2}, \cite{sjy}, \cite{sjy2}, \cite{qww8}, \cite{wym}).

There are also some papers about the generalized Sylvester tensor equation (eg. \cite{bm}, \cite{cz},  \cite{zhh}, \cite{wml}, \cite{ll1}, \cite{ll2}, \cite{sxh}, \cite{slz}, \cite{qww8}). For example,
Sun et al.  \cite{slz} derived the solvability conditions and solutions for the Sylvester tensor equation
\[\mathcal{A}\ast_{N}\mathcal{X}+
\mathcal{X}\ast_{M}\mathcal{D}=\mathcal{C},\]
where $\mathcal{A}\in\mathbb{R}^{I_{1}\times\cdots\times I_{N}\times I_{1}\times\cdots\times I_{N}},~\mathcal{D}\in\mathbb{R}^{I_{1}\times\cdots\times I_{N}\times J_{1}\times\cdots\times J_{M}},~\mathcal{C}\in\mathbb{R}^{J_{1}\times\cdots\times J_{M}\times J_{1}\times\cdots\times J_{M}}$. The solvability conditions and solutions of two-sided Sylvester tensor equation
\[\mathcal{A}\ast_{N}\mathcal{X}_1\ast_{M}\mathcal{B}+
\mathcal{C}\ast_{N}\mathcal{X}_{2}\ast_{M}\mathcal{D}=
\mathcal{E}\]
was studied by He \cite{zhh}, where $\mathcal{A}\in\mathbb{H}^{I_{1}\times\cdots\times I_{N}\times Q_{1}\times\cdots\times Q_{N}},$ $\mathcal{B}\in\mathbb{H}^{P_{1}\times\cdots\times P_{M}\times J_{1}\times\cdots\times J_{M}},$
$\mathcal{C}\in\mathbb{H}^{I_{1}\times\cdots\times I_{N}\times L_{1}\times\cdots\times L_{N}},$ $\mathcal{D}\in\mathbb{H}^{S_{1}\times\cdots\times S_{M}\times J_{1}\times\cdots\times J_{M}},$
$\mathcal{E}\in\mathbb{H}^{I_{1}\times\cdots\times I_{N}\times J_{1}\times\cdots\times J_{N}}$.
He et al. \cite{zhh1} gave some solvability conditions for the Sylvester quaternion tensor system
\begin{equation}\label{mainsystem00}
\begin{cases}
\mathcal{A}_1\ast_{N}\mathcal{X}_1-
\mathcal{X}_{2}\ast_{N}\mathcal{D}_1=
\mathcal{E}_1\\
\mathcal{A}_2\ast_{N}\mathcal{X}_2-
\mathcal{X}_{3}\ast_{N}\mathcal{D}_2=
\mathcal{E}_2\\
\mathcal{A}_3\ast_{N}\mathcal{X}_3-
\mathcal{X}_{4}\ast_{N}\mathcal{D}_3=
\mathcal{E}_3\\
\mathcal{A}_4\ast_{N}\mathcal{X}_4-
\mathcal{X}_{5}\ast_{N}\mathcal{D}_4=
\mathcal{E}_4,
\end{cases}
\end{equation}
where $\mathcal{A}_i\in\mathbb{H}^{I_{1}\times\cdots\times I_{N}\times Q_{1}\times\cdots\times Q_{N}},$ 
 $\mathcal{D}_i\in\mathbb{H}^{S_{1}\times\cdots\times S_{M}\times J_{1}\times\cdots\times J_{M}},$
$\mathcal{E}_i\in\mathbb{H}^{I_{1}\times\cdots\times I_{N}\times J_{1}\times\cdots\times J_{N}},i=\overline{1,4}$.

In this artical  our purpose is to present a treatment of methods that are useful in the study of solving
Sylvester tensor systems
\begin{equation}\label{mainsystem01}
\begin{cases}
\mathcal{A}_1\ast_{N}\mathcal{X}_1+
\mathcal{Y}_1\ast_{M}\mathcal{B}_1+\mathcal{C}_1\ast_{N}
\mathcal{Z}_1\ast_{M}\mathcal{D}_1+\mathcal{F}_1\ast_{N}
\mathcal{Z}_2\ast_{M}\mathcal{G}_1=\mathcal{E}_1&\\
\mathcal{A}_2\ast_{N}\mathcal{X}_2+
\mathcal{Y}_2\ast_{M}\mathcal{B}_2+\mathcal{C}_2\ast_{N}
\mathcal{Z}_2\ast_{M}\mathcal{D}_2+\mathcal{F}_2\ast_{N}
\mathcal{Z}_3\ast_{M}\mathcal{G}_2=\mathcal{E}_2&\\
\mathcal{A}_3\ast_{N}\mathcal{X}_3+
\mathcal{Y}_3\ast_{M}\mathcal{B}_3+\mathcal{C}_3\ast_{N}
\mathcal{Z}_3\ast_{M}\mathcal{D}_3+\mathcal{F}_3\ast_{N}
\mathcal{Z}_4\ast_{M}\mathcal{G}_3=\mathcal{E}_3,
\end{cases}
\end{equation}
where \begin{equation*}
\begin{split}&\mathcal{A}_{i}\in\mathbb{H}^{I_{1}\times\cdots\times I_{N}\times K_{1}\times\cdots\times K_{N}},~\mathcal{B}_{i}\in\mathbb{H}^{O_{1}\times\cdots\times O_{M}\times J_{1}\times\cdots\times J_{M}},\\&\mathcal{C}_{i}\in\mathbb{H}^{I_{1}\times\cdots\times I_{N}\times Q_{1}\times\cdots\times Q_{N}},~\mathcal{D}_{i}\in\mathbb{H}^{P_{1}\times\cdots\times P_{M}\times J_{1}\times\cdots\times J_{M}},
\\&\mathcal{F}_{i}\in\mathbb{H}^{I_{1}\times\cdots\times I_{N}\times L_{1}\times\cdots\times L_{N}},~\mathcal{G}_{i}\in\mathbb{H}^{S_{1}\times\cdots\times S_{M}\times J_{1}\times\cdots\times J_{M}},\\
&\mathcal{E}_{i}\in\mathbb{H}^{I_{1}\times\cdots\times I_{N}\times J_{1}\times\cdots\times J_{N}},~i=1,2,3.\end{split}
\end{equation*}
we establish the necessary and sufficient conditions for the existence of general solution in terms of  Moore-Penrose inverse, and present  general solution to System (\ref{mainsystem01}).

During investigation about (\ref{mainsystem01}), we are motivatied to find that it is useful for solving  the following Sylvester quaternion tensor system with five variables:
\begin{equation}\label{mainsystem02}
\begin{cases}
\mathcal{A}_1\ast_{N}\mathcal{Z}_1\ast_{M}\mathcal{B}_1+
\mathcal{C}_1\ast_{N}\mathcal{Z}_{2}\ast_{M}\mathcal{D}_1=
\mathcal{E}_1\\
\mathcal{A}_2\ast_{N}\mathcal{Z}_2\ast_{M}\mathcal{B}_2+
\mathcal{C}_2\ast_{N}\mathcal{Z}_{3}\ast_{M}\mathcal{D}_2=
\mathcal{E}_2\\
\mathcal{A}_3\ast_{N}\mathcal{Z}_3\ast_{M}\mathcal{B}_3+
\mathcal{C}_3\ast_{N}\mathcal{Z}_{4}\ast_{M}\mathcal{D}_3=
\mathcal{E}_3\\
\mathcal{A}_4\ast_{N}\mathcal{Z}_4\ast_{M}\mathcal{B}_4+
\mathcal{C}_4\ast_{N}\mathcal{Z}_{5}\ast_{M}\mathcal{D}_4=
\mathcal{E}_4,
\end{cases}
\end{equation}
where $\mathcal{A}_i\in\mathbb{H}^{I_{1}\times\cdots\times I_{N}\times Q_{1}\times\cdots\times Q_{N}},$ $\mathcal{B}_i\in\mathbb{H}^{P_{1}\times\cdots\times P_{M}\times J_{1}\times\cdots\times J_{M}},$
$\mathcal{C}_i\in\mathbb{H}^{I_{1}\times\cdots\times I_{N}\times L_{1}\times\cdots\times L_{N}},$ $\mathcal{D}_i\in\mathbb{H}^{S_{1}\times\cdots\times S_{M}\times J_{1}\times\cdots\times J_{M}},$
$\mathcal{E}_i\in\mathbb{H}^{I_{1}\times\cdots\times I_{N}\times J_{1}\times\cdots\times J_{N}}$.
Also, if $\mathcal{B}_i,\mathcal{C}_i$ are not all zeros, the  system (\ref{mainsystem02}) can also serve as an extended form to the system (\ref{mainsystem00}).

The  paper is divided, structually, into three parts. The first part   (Section 2) contain  some basic notations, definitions and lemmas as tools. The second part   (Section 3) give the solvability conditions and general solutions to quaternion tensor system (\ref{mainsystem01}).  In the third part (Section 4), we provide the solvability conditions and  general solutions to quaternion tensor system (\ref{mainsystem02}).

\section{Preliminaries}
A tensor $\mathcal{A}=(a_{i_{1}\ldots i_{N}})_{1\leq i_{j}\leq I_{j}}$
$(j=1,\ldots,N)$ of order $N$ is a multidimensional array with $I_{1}I_{2}\cdots I_{N}$ entries,
where $N$ is a positive integer. The sets of tensors of order $N$ with dimension $I_{1}\times I_{2}\times\cdots
\times I_{N}$ over the complex field $\mathbb{C}$, the real field $\mathbb{R}$ and the real quaternion algebra
\[\mathbb{H}=\{q_{0}+q_{1}\mathbf{i}+q_{2}\mathbf{j}+q_{3}\mathbf{k}~|~
\mathbf{i}^{2}=\mathbf{j}^{2}=\mathbf{k}^{2}=ijk=-1,q_{0},q_{1},q_{2},q_{3}\in \mathbb{R}\}\]
are represented, respectively, by $\mathbb{C}^{I_{1}\times I_{2}\times\cdots
\times I_{N}}$, $\mathbb{R}^{I_{1}\times I_{2}\times\cdots\times I_{N}}$ and $\mathbb{H}^{I_{1}\times I_{2}\times\cdots\times I_{N}}$. There are more definitions and propositions of quaternions refer to the book \cite{lr} and the paper \cite{fzz}.

A tensor can be viewed as a matrix if $N=2$. For review and convenience in reference, this section contains a summary of necessary facts about
 Moore-Penrose inverse, $\{i\}$-inverse
 of quaternion tensors over $\mathbb{H}^{I_{1}\times\cdots\times I_{N}\times K_{1}
 \times\cdots\times K_{N}}$  as well as some basic definitions related the Einstein product.
Beyond this, we survey some special quaternion tensor systems and their solutions when they are solvable.

\begin{definition}\cite{ea}
Let $\mathcal{A}\in\mathbb{H}^{I_{1}\times\cdots \times I_{P}\times K_{1}
\times\cdots\times K_{N}},\mathcal{B}\in\mathbb{H}^{K_{1}\times\cdots\times
K_{N}\times J_{1}\times\cdots\times J_{M}}$, their Einstein~product is satisfying the associative law, which is
\[(\mathcal{A}\ast_{N}\mathcal{B})_{i_{1}\ldots i_{P}j_{1}\ldots j_{M}}=
\sum_{k_{1},\ldots, k_{N}=1}^{K_{1},\ldots, K_{N}}a_{i_{1}\ldots i_{P}k_{1}
\ldots k_{N}}b_{k_{1}\ldots k_{N}j_{1}\ldots j_{M}},\]
where $\mathcal{A}\ast_{N}\mathcal
{B}\in\mathbb{H}^{I_{1}\times\cdots \times I_{P}\times J_{1}\times\cdots\times
J_{M}}$.
\end{definition}
For simplicity we will denote the above summation by $\sum_{k_{1}\ldots k_{N}}$ or $\sum\limits_{k_{1}\ldots k_{N}}$.  When $N=P=M=1$, we have that $\mathcal{A},\mathcal{B}$ are quaternion matrices, and their Einstein~product is the usual matrix product.

\begin{definition}\cite{zhh}
Let $\mathcal{A}\in\mathbb{H}^{I_{1}\times\cdots\times I_{N}\times K_{1}\times\ldots
\times K_{N}}$, the tensor $\mathcal{X}\in \mathbb{H}^{K_{1}\times\cdots\times K_{N}\times I_{1}\times\cdots\times I_{N}}$ satisfying
\begin{align*}
&\mathrm{(1)}~ \mathcal{A}\ast_{N}\mathcal{X}\ast_{N}\mathcal{A}=\mathcal{A};&\\
&\mathrm{(2)}~
\mathcal{X}\ast_{N}\mathcal{A}\ast_{N}\mathcal{X}=\mathcal{X};&\\
&\mathrm{(3)}~(\mathcal{A}\ast_{N}\mathcal{X})^{*}=\mathcal{A}\ast_{N}\mathcal{X};&\\
&\mathrm{(4)}~(\mathcal{X}\ast_{N}\mathcal{A})^{*}=\mathcal{X}\ast_{N}\mathcal{A},
\end{align*}
is called the Moore-Penrose inverse of $\mathcal{A}$, abbreviated by $\mathrm{M}$-$\mathrm{P}$ inverse, denoted by $\mathcal{A}^{\dag}$.
\end{definition}
The M-P inverse of $\mathcal{A}$ exists and
 is unique. If $(i)$, $i=1, 2, 3, 4$ of the above
 equation holds, then $\mathcal{X}$ is called an $\{i\}$-inverse of $\mathcal{A}$,
  denoted by $\mathcal{A}^{(i)}$.
Furthermore, $\mathcal{L_{A}}$ and $\mathcal{R_{A}}$ stand for the two projectors
 $\mathcal{L_{A}}=\mathcal{I}-\mathcal{A}^{\dag}\ast_{N}\mathcal{A}$ and
 $\mathcal{R_{A}}=\mathcal{I}-\mathcal{A}\ast_{N}\mathcal{A}^{\dag}$ induced
 by $\mathcal{A}$, respectively.
  We say the tensor $\mathcal{B}\in\mathbb{H}^{I_{1}\times\cdots\times I_{N}\times
  I_{1}\times\cdots\times I_{N}}$ is the
inverse of tensor $\mathcal{A}\in\mathbb{H}^{I_{1}\times\cdots\times I_{N}\times
  I_{1}\times\cdots\times I_{N}}$, if $\mathcal{A}\ast_{N}\mathcal{B}=\mathcal{I}=
\mathcal{B}\ast_{N}\mathcal{A}$, and we denote $\mathcal{B}=\mathcal{A}^{-1}$ \cite{bm}.
Moreover, we will say that a tensor is nonsingular if it has an inverse. For an invertible
 tensor $\mathcal{A}$, $\mathcal{A}^{\dag}=\mathcal{A}^{(i)}=\mathcal{A}^{-1}$.

Given a tensor \[\mathcal{A}=(a_{i_{1}\ldots i_{N}j_{1}\ldots j_{M}})\in
\mathbb{H}^{I_{1}\times\cdots \times I_{N}\times J_{1}\times\cdots\times J_{M}},\]
 the tensor $\mathcal{B}=(b_{i_{1}\ldots i_{M}j_{1}\ldots j_{N}})\in\mathbb{H}^{J_{1}
\times\cdots\times J_{M}\times I_{1}\times\cdots\times I_{N}}$ is called
the conjugate transpose of $\mathcal{A}$, and it is denoted by $\mathcal{A}^{*}$, where $b_{i_{1}\ldots i_{M}j_{1}\ldots j_{N}}=\overline{a}_{j_{1}\ldots j_{N}i_{1}\ldots i_{M}}$. The quaternion $\overline{q}=q_{0}-q_{1}\mathbf{i}-q_{2}\mathbf{j}-q_{3}\mathbf{k}$ stands for the conjugate of quaternion $q=q_{0}+q_{1}\mathbf{i}+q_{2}\mathbf{j}+q_{3}\mathbf{k}$, $q_{0},q_{1},q_{2},q_{3}\in \mathbb{R}$.
The tensor $\mathcal{B}=(a_{j_{1}\ldots j_{N}i_{1}\ldots i_{M}})\in\mathbb{H}^{J_{1}\times\cdots\times J_{M}\times I_{1}\times\cdots\times I_{N}}$ is called the transpose of $\mathcal{A}$, and it is denoted by $\mathcal{A}^{T}$. We say that $\mathcal{D}\in\mathbb{H}^{I_{1}\times\cdots\times I_{N}\times J_{1}\times\cdots\times J_{N}}$ is a diagonal tensor if $z\neq i_{1}\ldots i_{N}i_{1}\ldots i_{N}$, then $d_{z}=0$.  $\mathcal{D}$ is a unit tensor, if it is diagonal and $d_{i_{1}\ldots i_{N}i_{1}\ldots i_{N}}=1$, it is denoted by $\mathcal{I}$. We define the trace of a tensor $\mathcal{A}=(a_{i_{1}\ldots i_{N}j_{1}\ldots j_{N}})\in
\mathbb{H}^{I_{1}\times\cdots\times I_{N}\times I_{1}\times\cdots\times I_{N}}$ by $\mathrm{tr}(\mathcal{A})=\Sigma_{i_{1}\ldots i_{N}}
a_{i_{1}\ldots i_{N}i_{1}\ldots i_{N}}$ \cite{bm}. Moreover, we say that matricization is the process of transforming a tensor into a matrix, that is a reordering of the elements of an order $N$ tensor into a matrix, this is also called unfolding or flattening. For example, a  $2\times3\times4\times8$ tensor can be matricized into a $12\times16$ matrix or a $6\times32$ matrix, and so on \cite{tgk}.

The following results can be verified easily.
\begin{proposition}\cite{slz}
Let $\mathcal{A}\in\mathbb{H}^{I_{1}\times\cdots I_{P}\times K_{1}\times\cdots\times K_{N}}$ and $\mathcal{B}\in\mathbb{H}^{K_{1}\times\cdots K_{N}\times J_{1}\times\cdots\times J_{M}}$. Then

$\mathrm{(1)}$ $(\mathcal{A}\ast_{N}\mathcal{B})^{*}=\mathcal{B}^{*}\ast_{N}\mathcal{A}^{*};$

$\mathrm{(2)}$
$\mathcal{I}_{N}\ast_{N}\mathcal{B}=\mathcal{B}$ and $\mathcal{B}\ast_{M}\mathcal{I}_{M}=\mathcal{B}$, where unit tensors $\mathcal{I}_{N}\in\mathbb{H}^{K_{1}\times\cdots\times K_{N}\times K_{1}\times\cdots\times K_{N}}$ and $\mathcal{I}_{M}\in\mathbb{H}^{J_{1}\times\cdots\times J_{M}\times J_{1}\times\cdots\times J_{M}}$.
\end{proposition}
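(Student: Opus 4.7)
The plan is to prove both parts by a direct entrywise computation using only the definition of the Einstein product, the definition of the conjugate transpose, the defining property of the unit tensor, and the quaternionic identity $\overline{pq}=\overline{q}\,\overline{p}$ (which reverses factor order). Because the statement is purely formal, no operator-theoretic machinery is needed; the only delicate point is the non-commutativity of $\mathbb{H}$.

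For part (1), I would fix arbitrary outer indices $(j_{1},\dots,j_{M},i_{1},\dots,i_{P})$ and compute
\[
\bigl[(\mathcal{A}\ast_{N}\mathcal{B})^{*}\bigr]_{j_{1}\dots j_{M}i_{1}\dots i_{P}}
=\overline{\bigl[\mathcal{A}\ast_{N}\mathcal{B}\bigr]_{i_{1}\dots i_{P}j_{1}\dots j_{M}}}
=\overline{\sum_{k_{1}\dots k_{N}} a_{i_{1}\dots i_{P}k_{1}\dots k_{N}}\,b_{k_{1}\dots k_{N}j_{1}\dots j_{M}}}.
\]
Since conjugation of a finite sum of quaternions distributes over the sum, and since it reverses products, each summand becomes $\overline{b_{k_{1}\dots k_{N}j_{1}\dots j_{M}}}\,\overline{a_{i_{1}\dots i_{P}k_{1}\dots k_{N}}}$, which by the definition of the conjugate transpose equals $[\mathcal{B}^{*}]_{j_{1}\dots j_{M}k_{1}\dots k_{N}}\,[\mathcal{A}^{*}]_{k_{1}\dots k_{N}i_{1}\dots i_{P}}$. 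Resumming over $k_{1},\dots,k_{N}$ reassembles the Einstein product and gives $[\mathcal{B}^{*}\ast_{N}\mathcal{A}^{*}]_{j_{1}\dots j_{M}i_{1}\dots i_{P}}$, as required.

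For part (2), I would expand the Einstein product directly. Writing $(\mathcal{I}_{N})_{k'_{1}\dots k'_{N}k_{1}\dots k_{N}}$ as the product of Kronecker deltas $\delta_{k'_{1}k_{1}}\cdots\delta_{k'_{N}k_{N}}$, only the single index choice $k_{\ell}=k'_{\ell}$ ($\ell=1,\dots,N$) contributes to the sum
\[
(\mathcal{I}_{N}\ast_{N}\mathcal{B})_{k'_{1}\dots k'_{N}j_{1}\dots j_{M}}
=\sum_{k_{1}\dots k_{N}}(\mathcal{I}_{N})_{k'_{1}\dots k'_{N}k_{1}\dots k_{N}}\,b_{k_{1}\dots k_{N}j_{1}\dots j_{M}},
\]
which collapses to $b_{k'_{1}\dots k'_{N}j_{1}\dots j_{M}}$, i.e.\ $\mathcal{B}$ itself. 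The right identity $\mathcal{B}\ast_{M}\mathcal{I}_{M}=\mathcal{B}$ follows by an identical computation, now summing over $(j_{1},\dots,j_{M})$ against $(\mathcal{I}_{M})_{j_{1}\dots j_{M}j'_{1}\dots j'_{M}}$.

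The main (and essentially only) obstacle is to use the quaternion conjugation rule $\overline{pq}=\overline{q}\,\overline{p}$ with the correct order reversal; once that is respected, the rest is routine index bookkeeping. In particular, the argument goes through verbatim over $\mathbb{C}$ or $\mathbb{R}$, where commutativity would make the order irrelevant.
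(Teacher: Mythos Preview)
Your proof is correct; the paper does not supply its own proof of this proposition but merely cites it from \cite{slz} with the remark that the results ``can be verified easily,'' and your direct entrywise computation using the definitions of the Einstein product, conjugate transpose, and unit tensor is precisely the routine verification the paper has in mind.
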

\begin{proposition}
Let $\mathcal{A}\in\mathbb{H}^{I_{1}\times\cdots\times I_{N}\times K_{1}\times\cdots\times K_{N}}$. Then

$\mathrm{(1)}$ $(\mathcal{A}^{\dag})^{\dag}=\mathcal{A}$;

$\mathrm{(2)}$ $(\mathcal{A}^{\dag})^{*}=(A^{*})^{\dag}$;

$\mathrm{(3)}$ $(\mathcal{A}^{*}\ast_{N}\mathcal{A})^{\dag}=\mathcal{A}^{\dag}\ast_{N}(\mathcal{A}^{*})^{\dag}$, $(\mathcal{A}\ast_{N}\mathcal{A}^{*})^{\dag}=(\mathcal{A}^{*})^{\dag}\ast_{N}\mathcal{A}^{\dag}$;

$\mathrm{(4)}$ $\mathcal{A}^{\dag}\ast_{N}\mathcal{R_{A}}=0$ and $\mathcal{R_{A}}\ast_{N}\mathcal{A}=0$.

\end{proposition}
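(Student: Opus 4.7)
The plan is to verify each of the four identities by the same template: write down the proposed candidate, check that it satisfies the four defining relations (1)--(4) of the Moore-Penrose inverse, and conclude by the uniqueness of the M-P inverse noted right after the definition. The only tools needed beyond (1)--(4) are associativity of $\ast_{N}$ and the reversal rule $(\mathcal{A}\ast_{N}\mathcal{B})^{*}=\mathcal{B}^{*}\ast_{N}\mathcal{A}^{*}$, both of which are already available from Proposition 2.1.

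For (1), I will observe that the system (1)--(4) is symmetric under the swap $\mathcal{A}\leftrightarrow\mathcal{A}^{\dag}$: relations (1) and (2) trade places, while (3) and (4) are fixed. Hence $\mathcal{A}$ itself fulfils the four relations as an M-P inverse of $\mathcal{A}^{\dag}$, and uniqueness forces $(\mathcal{A}^{\dag})^{\dag}=\mathcal{A}$. For (2), I will take the conjugate transpose of each of (1)--(4) for $\mathcal{A}^{\dag}$ and apply the reversal rule; the resulting four equations are precisely (1)--(4) for the pair $(\mathcal{A}^{*},(\mathcal{A}^{\dag})^{*})$, so uniqueness yields $(\mathcal{A}^{\dag})^{*}=(\mathcal{A}^{*})^{\dag}$.

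For (3), I will take the candidate $\mathcal{X}=\mathcal{A}^{\dag}\ast_{N}(\mathcal{A}^{*})^{\dag}$ and verify (1)--(4) for the target $\mathcal{A}^{*}\ast_{N}\mathcal{A}$. The two auxiliary identities that carry the computation, obtained by rewriting the Hermitian projection $\mathcal{A}\ast_{N}\mathcal{A}^{\dag}=(\mathcal{A}^{\dag})^{*}\ast_{N}\mathcal{A}^{*}$ and invoking item (2), are
\[
\mathcal{A}^{*}\ast_{N}\mathcal{A}\ast_{N}\mathcal{A}^{\dag}=\mathcal{A}^{*},\qquad (\mathcal{A}^{*})^{\dag}\ast_{N}\mathcal{A}^{*}=\mathcal{A}\ast_{N}\mathcal{A}^{\dag}.
\]
With these in hand, each of the four defining relations for $\mathcal{A}^{*}\ast_{N}\mathcal{A}$ collapses to a single application of an M-P relation for $\mathcal{A}$ or $\mathcal{A}^{*}$. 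The twin formula $(\mathcal{A}\ast_{N}\mathcal{A}^{*})^{\dag}=(\mathcal{A}^{*})^{\dag}\ast_{N}\mathcal{A}^{\dag}$ then follows by applying what has just been proved to $\mathcal{A}^{*}$ in place of $\mathcal{A}$ and invoking (1). Part (4) is immediate: expanding $\mathcal{R_{A}}=\mathcal{I}-\mathcal{A}\ast_{N}\mathcal{A}^{\dag}$ and using relations (2) and (1) of the definition respectively gives the two cancellations.

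I do not anticipate any genuine obstacle; the whole argument is an exercise in careful bookkeeping under $\ast_{N}$, with uniqueness doing the conceptual work. The one place that deserves attention is item (3): the temptation to expand everything must be resisted, and threading the two Hermitian-projection identities $(\mathcal{A}\ast_{N}\mathcal{A}^{\dag})^{*}=\mathcal{A}\ast_{N}\mathcal{A}^{\dag}$ and $(\mathcal{A}^{\dag}\ast_{N}\mathcal{A})^{*}=\mathcal{A}^{\dag}\ast_{N}\mathcal{A}$ through the computation keeps the chain of identities short; item (2), proved first, is indispensable for replacing $(\mathcal{A}^{*})^{\dag}$ by $(\mathcal{A}^{\dag})^{*}$ wherever it appears.
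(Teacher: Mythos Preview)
Your proposal is correct and carries out exactly the kind of routine verification the paper has in mind: the paper does not supply an explicit proof for this proposition but merely remarks beforehand that ``the following results can be verified easily,'' and your direct check of the four defining relations together with uniqueness of the Moore--Penrose inverse is precisely that verification.
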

\begin{proposition}\cite{slz}
The following equalities hold:

$\mathrm{(1)}$ $\begin{bmatrix}
                  \mathcal{A}_{1} & \mathcal{B}_{1} \\
                  \mathcal{A}_{2} & \mathcal{B}_{2} \\
                \end{bmatrix}\ast_{M}\begin{bmatrix}
                                       \mathcal{C} \\
                                       \mathcal{D} \\
                                     \end{bmatrix}=\begin{bmatrix}
                                                     \mathcal{A}_{1}\ast_{M}\mathcal{C}+\mathcal{B}_{1}\ast_{M}\mathcal{D} \\
                                               \mathcal{A}_{2}\ast_{M}\mathcal{C}+\mathcal{B}_{2}\ast_{M}\mathcal{D}        \\
                                                   \end{bmatrix}\in\mathbb{H}^{\rho_{1}\times\cdots\times\rho_{N}\times\alpha^{N}}
$;

$\mathrm{(2)}$ $\begin{bmatrix}
                  \mathcal{G} & \mathcal{H} \\
                \end{bmatrix}\ast_{N}\begin{bmatrix}
                  \mathcal{A}_{1} & \mathcal{B}_{1} \\
                  \mathcal{A}_{2} & \mathcal{B}_{2} \\
                \end{bmatrix}=\begin{bmatrix}
                                \mathcal{G}\ast_{N}\mathcal{A}_{1}+\mathcal{H}\ast_{N}\mathcal{A}_{2} & \mathcal{G}\ast_{N}\mathcal{B}_{1}+\mathcal{H}\ast_{N}\mathcal{B}_{2}  \\
                              \end{bmatrix}\in\mathbb{H}^{S_{1}\times\cdots\times S_{N}\times\beta_{1}\times\cdots\times\beta_{M}}
$,
where $\mathcal{A}_{1}\in\mathbb{H}^{I_{1}\times\cdots\times I_{N}\times J_{1}\times\cdots\times J_{M}},$ $\mathcal{B}_{1}\in\mathbb{H}^{I_{1}\times\cdots\times I_{N}\times K_{1}\times\cdots\times K_{M}},$  $\mathcal{A}_{2}\in\mathbb{H}^{L_{1}\times\cdots\times L_{N}\times J_{1}\times\cdots\times J_{M}},$ $\mathcal{B}_{2}\in\mathbb{H}^{L_{1}\times\cdots\times L_{N}\times K_{1}\times\cdots\times K_{M}}$, $\mathcal{C}\in\mathbb{H}^{J_{1}\times\cdots\times J_{M}\times I_{1}\times\cdots\times I_{N}}, \mathcal{D}\in\mathbb{H}^{K_{1}\times\cdots\times K_{M}\times I_{1}\times\cdots\times I_{N}},  \mathcal{G}\in\mathbb{H}^{S_{1}\times\cdots\times S_{N}\times I_{1}\times\cdots\times I_{N}},$ $  \mathcal{H}\in\mathbb{H}^{S_{1}\times\cdots\times S_{N}\times L_{1}\times\cdots\times L_{N}}, \alpha^{N}=I_{1}\times\cdots\times I_{N},$ and $\rho_{i}=I_{i}+L_{i},i=1,\ldots,N; \beta_{j}=J_{j}+K_{j},j=1,\ldots,M$.
\end{proposition}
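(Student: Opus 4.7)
The plan is to verify both identities entry-by-entry by unfolding the definition of the Einstein product and matching block-by-block. No analytic input beyond Definition 2.1 and distributivity is required; the entire content is combinatorial bookkeeping. First I would make the block convention precise: with $\rho_\nu=I_\nu+L_\nu$ and $\beta_j=J_j+K_j$, the block tensor $\mathcal{T}:=\begin{bmatrix}\mathcal{A}_{1} & \mathcal{B}_{1}\\ \mathcal{A}_{2} & \mathcal{B}_{2}\end{bmatrix}\in\mathbb{H}^{\rho_1\times\cdots\times\rho_N\times\beta_1\times\cdots\times\beta_M}$ has its $(r_1\ldots r_N,k_1\ldots k_M)$-entry read off from one of $\mathcal{A}_1,\mathcal{B}_1,\mathcal{A}_2,\mathcal{B}_2$ (after shifting $r_\nu\mapsto r_\nu-I_\nu$ and/or $k_j\mapsto k_j-J_j$ where appropriate), depending on which of the four orthants $\{r_\nu\le I_\nu \text{ or } r_\nu>I_\nu\}\times\{k_j\le J_j \text{ or } k_j>J_j\}$ the index belongs to; mixed orthants are declared to yield zero. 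The block column $\mathcal{S}:=\begin{bmatrix}\mathcal{C}\\ \mathcal{D}\end{bmatrix}$ splits analogously along its first $M$ modes.

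For part (1), I would fix an output index $(r_1\ldots r_N,s_1\ldots s_N)$ of $\mathcal{T}\ast_M\mathcal{S}$ and expand
\begin{equation*}
(\mathcal{T}\ast_M\mathcal{S})_{r_1\ldots r_N,s_1\ldots s_N}=\sum_{k_1\ldots k_M}\mathcal{T}_{r_1\ldots r_N,k_1\ldots k_M}\,\mathcal{S}_{k_1\ldots k_M,s_1\ldots s_N}.
\end{equation*}
The key step is to split this sum across the $2^M$ orthants induced by the partition $\{1,\ldots,J_j\}\sqcup\{J_j+1,\ldots,\beta_j\}$ of each $k_j$. By the block convention only two orthants contribute nonzero terms, namely the all-$J$ and all-$K$ pieces; after reindexing $k_j\mapsto k_j-J_j$ in the latter and separating the two row-block cases $r_\nu\le I_\nu$ or $r_\nu>I_\nu$, the two surviving sums are exactly $(\mathcal{A}_i\ast_M\mathcal{C})_{r_1\ldots r_N,s_1\ldots s_N}+(\mathcal{B}_i\ast_M\mathcal{D})_{r_1\ldots r_N,s_1\ldots s_N}$ with $i\in\{1,2\}$ selected by the row block, matching precisely the claimed right-hand side.

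Part (2) is handled in the same spirit: I would expand an entry of $\begin{bmatrix}\mathcal{G}&\mathcal{H}\end{bmatrix}\ast_N\mathcal{T}$, split the $N$-fold contraction sum across the row-partition of $\mathcal{T}$, and read off the two column-blocks. The only real obstacle throughout is notational: one must track the simultaneous multi-dimensional block splittings and confirm that mixed orthants vanish --- which is forced by the block convention --- after which both identities follow from elementary sum manipulations together with the distributivity built into Definition 2.1.
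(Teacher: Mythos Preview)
Your entry-by-entry verification is correct and is the standard way to establish such block-product identities; in particular, your handling of the ``mixed orthants vanish'' convention is exactly what makes the $2^M$ (resp.\ $2^N$) orthant sum collapse to the two surviving pure blocks. Note, however, that the paper does not supply its own proof of this proposition: it is simply quoted from \cite{slz} as a preliminary fact, so there is no argument in the paper to compare yours against.
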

\begin{lemma}\cite{zhh}\label{lemma0}
Consider tensor equation $\mathcal{A}\ast_{N}\mathcal{X}\ast_{M}\mathcal{B}=C$. Let
$\mathcal{A}\in\mathbb{H}^{I_{1}\times\cdots\times I_{N}\times J_{1}\times\cdots\times J_{N}},~\mathcal{B}\in\mathbb{H}^{K_{1}\times\cdots\times K_{M}\times L_{1}\times\cdots\times L_{M}},$ $\mathcal{C}\in\mathbb{H}^{I_{1}\times\cdots\times I_{N}\times L_{1}\times\cdots\times L_{M}}.$ Then the quaternion tensor equation is consistent if and only if
\[\mathcal{R}_{\mathcal{A}}\ast_{N}\mathcal{C}=0,~
\mathcal{C}\ast_{M}\mathcal{L}_{\mathcal{B}}=0.\]
In this case, the general solution can be expressed as
\[\mathcal{X}=\mathcal{A}^{\dag}\ast_{N}\mathcal{C}
\ast_{M}\mathcal{B}^{\dag}
+\mathcal{L}_{\mathcal{A}}\ast_{N}\mathcal{U}
+\mathcal{V}\ast_{M}\mathcal{R}_{\mathcal{B}},\]
where $\mathcal{U}$ and $\mathcal{V}$ are arbitrary quaternion tensors with appropriate order.
\end{lemma}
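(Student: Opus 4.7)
The plan is to mirror the classical Penrose argument for $AXB=C$ in the matrix setting, replacing ordinary products by the Einstein products $\ast_N$ and $\ast_M$; since Proposition 2.3 and the four Moore--Penrose identities give exactly the same formal algebra as in the matrix case, each manipulation carries over verbatim.

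First I would dispatch the necessity. Assuming $\mathcal{A}\ast_N\mathcal{X}\ast_M\mathcal{B}=\mathcal{C}$, I would compute $\mathcal{R}_\mathcal{A}\ast_N\mathcal{C}=\mathcal{R}_\mathcal{A}\ast_N\mathcal{A}\ast_N\mathcal{X}\ast_M\mathcal{B}$, and observe that $\mathcal{R}_\mathcal{A}\ast_N\mathcal{A}=(\mathcal{I}-\mathcal{A}\ast_N\mathcal{A}^\dag)\ast_N\mathcal{A}=\mathcal{A}-\mathcal{A}\ast_N\mathcal{A}^\dag\ast_N\mathcal{A}=0$ by Moore--Penrose property (1). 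Symmetrically $\mathcal{B}\ast_M\mathcal{L}_\mathcal{B}=0$, yielding $\mathcal{C}\ast_M\mathcal{L}_\mathcal{B}=0$.

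Second I would prove sufficiency by showing that $\mathcal{X}_0:=\mathcal{A}^\dag\ast_N\mathcal{C}\ast_M\mathcal{B}^\dag$ is a particular solution. The hypothesis $\mathcal{R}_\mathcal{A}\ast_N\mathcal{C}=0$ rewrites as $\mathcal{C}=\mathcal{A}\ast_N\mathcal{A}^\dag\ast_N\mathcal{C}$, and $\mathcal{C}\ast_M\mathcal{L}_\mathcal{B}=0$ rewrites as $\mathcal{C}=\mathcal{C}\ast_M\mathcal{B}^\dag\ast_M\mathcal{B}$. Substituting gives $\mathcal{A}\ast_N\mathcal{X}_0\ast_M\mathcal{B}=\mathcal{A}\ast_N\mathcal{A}^\dag\ast_N\mathcal{C}\ast_M\mathcal{B}^\dag\ast_M\mathcal{B}=\mathcal{C}$. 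A direct check using $\mathcal{A}\ast_N\mathcal{L}_\mathcal{A}=0$ and $\mathcal{R}_\mathcal{B}\ast_M\mathcal{B}=0$ then shows that any tensor of the displayed form is also a solution.

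The only slightly non-routine step is completeness of the parametrisation: showing that every solution $\mathcal{X}$ can be written as $\mathcal{X}_0+\mathcal{L}_\mathcal{A}\ast_N\mathcal{U}+\mathcal{V}\ast_M\mathcal{R}_\mathcal{B}$ for some $\mathcal{U},\mathcal{V}$. I would use the decomposition
\[
\mathcal{X}=\mathcal{A}^\dag\ast_N\mathcal{A}\ast_N\mathcal{X}+\mathcal{L}_\mathcal{A}\ast_N\mathcal{X},
\]
then split the first summand further as
\[
\mathcal{A}^\dag\ast_N\mathcal{A}\ast_N\mathcal{X}=\mathcal{A}^\dag\ast_N\mathcal{A}\ast_N\mathcal{X}\ast_M\mathcal{B}\ast_M\mathcal{B}^\dag+\mathcal{A}^\dag\ast_N\mathcal{A}\ast_N\mathcal{X}\ast_M\mathcal{R}_\mathcal{B}.
\]
Using $\mathcal{A}\ast_N\mathcal{X}\ast_M\mathcal{B}=\mathcal{C}$, the first piece collapses to $\mathcal{A}^\dag\ast_N\mathcal{C}\ast_M\mathcal{B}^\dag=\mathcal{X}_0$, and choosing $\mathcal{U}=\mathcal{X}$, $\mathcal{V}=\mathcal{A}^\dag\ast_N\mathcal{A}\ast_N\mathcal{X}$ exhibits $\mathcal{X}$ in the claimed form. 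The main obstacle is purely bookkeeping: ensuring the order-type indices on $\ast_N$ and $\ast_M$ match so that every factor is well defined and Proposition 2.3 applies; once this is organised, the proof proceeds exactly as for matrices.
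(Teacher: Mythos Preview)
Your argument is correct and is the standard Penrose proof transported to the Einstein-product setting; each step is valid once one uses the associativity of $\ast_N,\ast_M$ together with the Moore--Penrose identities in Definition~2.2 (your citation of Proposition~2.3 appears to be a slip --- what you actually use is Proposition~2.1 and the defining relations, not the block-product formulas). Note, however, that the paper itself does not give a proof of this lemma: it is quoted from \cite{zhh} and stated without argument, so there is no in-paper proof to compare against. Your write-up therefore supplies exactly the routine verification that the paper omits.
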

\begin{lemma}\cite{zhh}\label{lemma1}
Let $\mathcal{A}\in \mathbb{H}^{I_{1}\times\cdots\times I_{N}\times J_{1}\times\cdots\times J_{N}},~\mathcal{B}\in \mathbb{H}^{K_{1}\times\cdots\times K_{M}\times L_{1}\times\cdots\times L_{M}},~\mathcal{C}\in \mathbb{H}^{I_{1}\times\cdots\times I_{N}\times J_{1}\times\cdots\times J_{N}},$ $\mathcal{D}\in\mathbb{H}^{H_{1}\times\cdots\times H_{M}\times L_{1}\times\cdots\times L_{M}}$ and $\mathcal{E}\in \mathbb{H}^{I_{1}\times\cdots\times I_{N}\times L_{1}\times\cdots\times L_{M}}$. Set $\mathcal{P}=(\mathcal{R}_{\mathcal{A}})\ast_{N}\mathcal{C},
~\mathcal{Q}=\mathcal{D}\ast_{M}(\mathcal{L}_{\mathcal{B}}),
~\mathcal{S}=\mathcal{C}\ast_{N}(\mathcal{L}_{\mathcal{P}})$. Then the generalized Sylvester quaternion tensor equation
\begin{equation}\label{e3}
\mathcal{A}\ast_{N}\mathcal{X}\ast_{M}\mathcal{B}+\mathcal{C}\ast_{N}\mathcal{Y}\ast_{M}\mathcal{D}=\mathcal{E}
\end{equation}
is consistent if and only if
\[\mathcal{R}_{\mathcal{P}}\ast_{N}\mathcal{R}_{\mathcal{A}}\ast_{N}
\mathcal{E}=0,\mathcal{E}\ast_{M}\mathcal{L}_{\mathcal{B}}\ast_{M}
\mathcal{L}_{\mathcal{Q}}=0,\mathcal{R_{A}}\ast_{N}\mathcal{E}
\ast_{M}\mathcal{L_{D}}=0,\mathcal{R_{C}}\ast_{N}\mathcal{E}
\ast_{M}\mathcal{L_{B}}=0\]
In this case, the general solution can be expressed as
\begin{equation*}
\begin{split}
\mathcal{X}&=\mathcal{A}^{\dagger}\ast_{N}\mathcal{E}\ast_{M}
\mathcal{B}^{\dagger}-\mathcal{A}^{\dagger}\ast_{N}\mathcal{C}\ast_{N}
\mathcal{P}^{\dagger}\ast_{N}\mathcal{E}\ast_{M}\mathcal{B}^{\dagger}
-\mathcal{A}^{\dagger}\ast_{N}\mathcal{S}\ast_{N}\mathcal{C}^{\dagger}
\ast_{N}\mathcal{E}\ast_{M}\mathcal{Q}^{\dagger}\ast_{M}\mathcal{D}
\ast_{M}\mathcal{B}^{\dagger}\\
&-\mathcal{A}^{\dagger}\ast_{N}\mathcal{S}\ast_{N}
\mathcal{U}_2\ast_{M}\mathcal{R}_{\mathcal{Q}}\ast_{M}\mathcal{D}
\ast_{M}\mathcal{B}^{\dagger}+\mathcal{L_{A}}\ast_{N}\mathcal{U}_{4}
+\mathcal{U}_{5}\ast_{M}\mathcal{R}_{\mathcal{B}},\\
\mathcal{Y}&=\mathcal{P}^{\dagger}\ast_{N}\mathcal{E}\ast_{M}
\mathcal{D}^{\dagger}
+\mathcal{S}^{\dagger}\ast_{N}\mathcal{S}\ast_{N}\mathcal{C}^{\dagger}
\ast_{N}\mathcal{E}\ast_{M}\mathcal{Q}^{\dagger}
+\mathcal{L}_{\mathcal{P}}\ast_{N}\mathcal{L}_{\mathcal{S}}\ast_{N}\mathcal{U}_{1}
+\mathcal{L}_{\mathcal{P}}\ast_{N}\mathcal{U}_{2}+\mathcal{U}_{3}\ast_{M}\mathcal{R_{D}},
\end{split}
\end{equation*}
where $\mathcal{U}_{i},~i=1,\ldots,5$ are arbitrary quaternion tensors with appropriate order.
\end{lemma}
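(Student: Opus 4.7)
The plan is to reduce this two-term generalized Sylvester equation to a sequence of one-term equations of the form $\mathcal{A}\ast_{N}\mathcal{X}\ast_{M}\mathcal{B}=\mathcal{C}$ handled by Lemma \ref{lemma0}. The central device is to project away one unknown at a time using the annihilators $\mathcal{R}_{\mathcal{A}}$, which kills $\mathcal{A}\ast_{N}\mathcal{X}$ on the left, and $\mathcal{L}_{\mathcal{B}}$, which kills $\mathcal{X}\ast_{M}\mathcal{B}$ on the right, together with the standard identities from Proposition~2.4 and the definitions of $\mathcal{P}$, $\mathcal{Q}$, $\mathcal{S}$.

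First I would establish necessity. Multiplying \eqref{e3} from the left by $\mathcal{R}_{\mathcal{A}}$ eliminates the first term and yields $\mathcal{P}\ast_{N}\mathcal{Y}\ast_{M}\mathcal{D}=\mathcal{R}_{\mathcal{A}}\ast_{N}\mathcal{E}$, whose solvability via Lemma \ref{lemma0} forces $\mathcal{R}_{\mathcal{P}}\ast_{N}\mathcal{R}_{\mathcal{A}}\ast_{N}\mathcal{E}=0$ and $\mathcal{R}_{\mathcal{A}}\ast_{N}\mathcal{E}\ast_{M}\mathcal{L}_{\mathcal{D}}=0$. Symmetrically, right-multiplying by $\mathcal{L}_{\mathcal{B}}$ produces $\mathcal{C}\ast_{N}\mathcal{Y}\ast_{M}\mathcal{Q}=\mathcal{E}\ast_{M}\mathcal{L}_{\mathcal{B}}$ and hence $\mathcal{R}_{\mathcal{C}}\ast_{N}\mathcal{E}\ast_{M}\mathcal{L}_{\mathcal{B}}=0$ and $\mathcal{E}\ast_{M}\mathcal{L}_{\mathcal{B}}\ast_{M}\mathcal{L}_{\mathcal{Q}}=0$, giving the four stated conditions.

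For sufficiency I would construct $\mathcal{Y}$ first and then recover $\mathcal{X}$. Under the four hypotheses, Lemma \ref{lemma0} applied to $\mathcal{P}\ast_{N}\mathcal{Y}\ast_{M}\mathcal{D}=\mathcal{R}_{\mathcal{A}}\ast_{N}\mathcal{E}$ yields a parametrization of $\mathcal{Y}$ with free terms $\mathcal{L}_{\mathcal{P}}\ast_{N}(\cdot)$ and $(\cdot)\ast_{M}\mathcal{R}_{\mathcal{D}}$; substituting this into the complementary reduced equation $\mathcal{C}\ast_{N}\mathcal{Y}\ast_{M}\mathcal{Q}=\mathcal{E}\ast_{M}\mathcal{L}_{\mathcal{B}}$ collapses, via $\mathcal{C}\ast_{N}\mathcal{L}_{\mathcal{P}}=\mathcal{S}$, to a further one-term equation $\mathcal{S}\ast_{N}\mathcal{V}\ast_{M}\mathcal{Q}=\mathcal{T}$ whose consistency is equivalent to the remaining two hypotheses. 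Iterating Lemma \ref{lemma0} delivers the displayed expression for $\mathcal{Y}$ with free tensors $\mathcal{U}_{1},\mathcal{U}_{2},\mathcal{U}_{3}$. With $\mathcal{Y}$ in hand, the original equation rearranges to $\mathcal{A}\ast_{N}\mathcal{X}\ast_{M}\mathcal{B}=\mathcal{E}-\mathcal{C}\ast_{N}\mathcal{Y}\ast_{M}\mathcal{D}$, whose consistency follows from the hypotheses already verified, and Lemma \ref{lemma0} then produces the formula for $\mathcal{X}$ with free tensors $\mathcal{U}_{4},\mathcal{U}_{5}$.

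The main obstacle will be the bookkeeping: one has to verify that every consistency condition appearing in the nested applications of Lemma \ref{lemma0} reduces to one of the four stated conditions, and that the displayed pair $(\mathcal{X},\mathcal{Y})$ parametrizes \emph{every} solution rather than some proper subfamily. This entails repeated use of the projector identities $\mathcal{R}_{\mathcal{A}}\ast_{N}\mathcal{P}=\mathcal{P}$, $\mathcal{R}_{\mathcal{P}}\ast_{N}\mathcal{R}_{\mathcal{A}}=\mathcal{R}_{\mathcal{P}}$, $\mathcal{L}_{\mathcal{P}}\ast_{N}\mathcal{L}_{\mathcal{S}}\ast_{N}\mathcal{C}^{\dagger}\ast_{N}\mathcal{C}=\mathcal{L}_{\mathcal{P}}\ast_{N}\mathcal{L}_{\mathcal{S}}$, together with careful back-substitution to absorb any residual dependence on $\mathcal{U}_{1},\mathcal{U}_{2},\mathcal{U}_{3}$ inside the formula for $\mathcal{X}$. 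No individual identity is deep, but orchestrating the cancellations so that the final expressions match those stated is where the real work lies.
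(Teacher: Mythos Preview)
The paper does not prove Lemma~\ref{lemma1}; it is quoted verbatim from \cite{zhh} and used as a black box. There is therefore no proof in the present paper to compare your proposal against.

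That said, your proposed strategy is the standard and correct one, and it is precisely the argument that underlies the result in \cite{zhh} (and its matrix predecessors): annihilate one unknown at a time with the projectors $\mathcal{R}_{\mathcal{A}}$ and $\mathcal{L}_{\mathcal{B}}$, reduce to instances of Lemma~\ref{lemma0}, solve for $\mathcal{Y}$ by combining the two reduced equations through $\mathcal{S}=\mathcal{C}\ast_{N}\mathcal{L}_{\mathcal{P}}$, and then back-substitute to recover $\mathcal{X}$. Your identification of the bookkeeping as the only real difficulty is accurate; the projector identities you list are exactly what is needed to close the argument and to show that the displayed formulas exhaust all solutions.
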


\section{Solvable conditions and general solution to the  system (\ref{mainsystem01})}
In this section we consider the solvability conditions and the expression of the general solution to the quaternion tensor system (\ref{mainsystem01}).  For simplicity, put
\begin{align}
\mathcal{A}_{ii}&=\mathcal{R}_{\mathcal{A}_{i}}
\ast_{N}\mathcal{C}_{i},
~\mathcal{B}_{ii}=\mathcal{D}_{i}\ast_{M}
\mathcal{L}_{\mathcal{B}_{i}},
~\mathcal{C}_{ii}=\mathcal{R}_{\mathcal{A}_{i}}\ast_{N}
\mathcal{F}_{i},
~\mathcal{D}_{ii}=\mathcal{G}_{i}\ast_{M}
\mathcal{L}_{\mathcal{B}_{i}},\label{july17equ001}\\
\mathcal{E}_{ii}&=\mathcal{R}_{\mathcal{A}_{i}}
\ast_{N}\mathcal{E}_{i}\ast_{M}\mathcal{L}_{\mathcal{B}_{i}},
~\mathcal{M}_{ii}=\mathcal{R}_{\mathcal{A}_{ii}}
\ast_{N}\mathcal{C}_{ii},\label{july17equ002}\\
~\mathcal{N}_{ii}&=\mathcal{D}_{ii}\ast_{M}
\mathcal{L}_{\mathcal{B}_{ii}},
~\mathcal{S}_{ii}=\mathcal{C}_{ii}\ast_{N}
\mathcal{L}_{\mathcal{M}_{ii}},
~i=1,~2,~3,\label{july17equ003}\\
\mathcal{A}_{j,j+1}&=\begin{bmatrix}
          \mathcal{L}_{\mathcal{M}_{jj}}\ast_{N}
          \mathcal{L}_{\mathcal{S}_{jj}} & -\mathcal{L}_{\mathcal{A}_{j+1,j+1}} \\
        \end{bmatrix},~\mathcal{B}_{j,j+1}=\begin{bmatrix}
                                \mathcal{R}_{\mathcal{D}_{jj}} \\
                                -\mathcal{R}_{\mathcal{B}_{j+1,j+1}} \\
                              \end{bmatrix},\label{july17equ004}\\
~\mathcal{C}_{j,j+1}&=\mathcal{L}_{\mathcal{M}_{jj}},
~\mathcal{D}_{j,j+1}=\mathcal{R}_{\mathcal{N}_{jj}},\label{july17equ005}\\
\mathcal{F}_{j,j+1}&=\mathcal{A}_{j+1,j+1}^{\dagger}
\ast_{N}\mathcal{S}_{j+1,j+1},
~\mathcal{G}_{j,j+1}=\mathcal{R}_{\mathcal{N}_{j+1,j+1}}\ast_{M}
\mathcal{D}_{j+1,j+1}\ast_{M}\mathcal{B}_{j+1,j+1}^{\dagger},\label{july17equ006}\\
\mathcal{E}_{j,j+1}&=\mathcal{M}_{jj}^{\dagger}\ast_{N}\mathcal{E}_{jj}
\ast_{M}\mathcal{D}_{jj}^{\dagger}
+\mathcal{S}_{jj}^{\dagger}\ast_{N}\mathcal{S}_{jj}\ast_{N}
\mathcal{C}_{jj}^{\dagger}\ast_{N}
\mathcal{E}_{jj}\ast_{M}\mathcal{N}_{jj}^{\dagger}-
\mathcal{A}_{j+1,j+1}^{\dagger}\ast_{N}\mathcal{E}_{j+1,j+1}\nonumber\\
&\ast_{M}
\mathcal{B}_{j+1,j+1}^{\dagger}
+\mathcal{A}_{j+1,j+1}^{\dagger}\ast_{N}
\mathcal{C}_{j+1,j+1}
\ast_{N}\mathcal{M}_{j+1,j+1}^{\dagger}\ast_{N}\mathcal{E}_{j+1,j+1}\ast_{M}
\mathcal{B}_{j+1,j+1}^{\dagger}
+\mathcal{A}_{j+1,j+1}^{\dagger}\nonumber\\
&\ast_{N}\mathcal{S}_{j+1,j+1}\ast_{N}
\mathcal{C}_{j+1,j+1}^{\dagger}\ast_{N}\mathcal{E}_{j+1,j+1}\ast_{M}
\mathcal{N}_{j+1,j+1}^{\dagger}\ast_{M}\mathcal{D}_{j+1,j+1}\ast_{M}
\mathcal{B}_{j+1,j+1}^{\dagger},\label{july17equ007}\\
\widehat{\mathcal{A}_{j,j+1}}&=R_{\mathcal{A}_{j,j+1}}\ast_{N}
\mathcal{C}_{j,j+1},
~\widehat{\mathcal{B}_{j,j+1}}=\mathcal{D}_{j,j+1}\ast_{M}
\mathcal{L}_{\mathcal{B}_{j,j+1}},\label{july17equ008}\\
~\widehat{\mathcal{C}_{j,j+1}}&=\mathcal{R}_{\mathcal{A}_{j,j+1}}
\ast_{N}\mathcal{F}_{j,j+1},
~\widehat{\mathcal{D}_{j,j+1}}=\mathcal{G}_{j,j+1}\ast_{M}
\mathcal{L}_{\mathcal{B}_{j,j+1}},\label{july17equ009}\\
\widehat{\mathcal{E}_{j,j+1}}&=\mathcal{R}_{\mathcal{A}_{j,j+1}}
\ast_{N}\mathcal{E}_{j,j+1}\ast_{M}\mathcal{L}_{\mathcal{B}_{j,j+1}},
~\widehat{\mathcal{M}_{j,j+1}}=\mathcal{R}_{\widehat{\mathcal{A}_{j,j+1}}}
\ast_{N}\widehat{\mathcal{C}_{j,j+1}},\label{july17equ010}\\
~\widehat{\mathcal{N}_{j,j+1}}&=\widehat{\mathcal{D}_{j,j+1}}\ast_{M}
\mathcal{L}_{\widehat{\mathcal{B}_{j,j+1}}},
~\widehat{\mathcal{S}_{j,j+1}}=\widehat{\mathcal{C}_{j,j+1}}\ast_{N}
\mathcal{L}_{\widehat{\mathcal{M}_{j,j+1}}},~j=1,~2,\label{july17equ011}
\\
\mathcal{A}&=\begin{bmatrix}
          \mathcal{L}_{\widehat{\mathcal{M}_{12}}}\ast_{N}
          \mathcal{L}_{\widehat{\mathcal{S}_{12}}} & -\mathcal{L}_{\widehat{\mathcal{A}_{23}}} \\
        \end{bmatrix},~\mathcal{B}=\begin{bmatrix}
                                \mathcal{R}_{\widehat{\mathcal{D}_{12}}} \\
                                -\mathcal{R}_{\widehat{\mathcal{B}_{23}}} \\
                              \end{bmatrix},\label{july17equ012}\\
~\mathcal{C}&=\mathcal{L}_{\widehat{\mathcal{M}_{12}}},
~\mathcal{D}=\mathcal{R}_{\widehat{\mathcal{N}_{12}}},&
\\
\mathcal{F}&=\widehat{\mathcal{A}_{23}}^{\dagger}\ast_{N}
\widehat{\mathcal{S}_{23}},
~\mathcal{G}=\mathcal{R}_{\widehat{\mathcal{N}_{23}}}
\ast_{M}\widehat{\mathcal{D}_{23}}
\ast_{M}\widehat{\mathcal{B}_{23}}^{\dagger},&
\end{align}
\begin{align}
\mathcal{E}&=\widehat{\mathcal{M}_{12}}^{\dagger}\ast_{N}
\widehat{\mathcal{E}_{12}}\ast_{M}\widehat{\mathcal{D}_{12}}^{\dagger}+
\widehat{\mathcal{S}_{12}}^{\dagger}\ast_{N}\widehat{\mathcal{S}_{12}}\ast_{N}
\widehat{\mathcal{C}_{12}}^{\dagger}\ast_{N}
\widehat{\mathcal{E}_{12}}\ast_{M}\widehat{\mathcal{N}_{12}}^{\dagger}-
\widehat{\mathcal{A}_{23}}^{\dagger}\ast_{N}\widehat{\mathcal{E}_{23}}
\nonumber\\
&\ast_{M}\widehat{\mathcal{B}_{23}}^{\dagger}+
\widehat{\mathcal{A}_{23}}^{\dagger}\ast_{N}\widehat{\mathcal{C}_{23}}\ast_{N}
\widehat{\mathcal{M}_{23}}^{\dagger}\ast_{N}
\widehat{\mathcal{E}_{23}}\ast_{M}\widehat{\mathcal{B}_{23}}^{\dagger}
+\widehat{\mathcal{A}_{23}}^{\dagger}\ast_{N}\widehat{\mathcal{S}_{23}}\ast_{N}
\widehat{\mathcal{C}_{23}}^{\dagger}\ast_{N}
\widehat{\mathcal{E}_{23}}\nonumber\\
&\ast_{M}\widehat{\mathcal{N}_{23}}^{\dagger}\ast_{M}
\widehat{\mathcal{D}_{23}}\ast_{M}
\widehat{\mathcal{B}_{23}}^{\dagger},\label{july17equ015}\\
\widehat{\mathcal{A}}&=\mathcal{R}_{\mathcal{A}}\ast_{N}\mathcal{C},
~\widehat{\mathcal{B}}=\mathcal{D}\ast_{M}\mathcal{L}_{\mathcal{B}},
~\widehat{\mathcal{C}}=\mathcal{R}_{\mathcal{A}}\ast_{N}\mathcal{F},
~\widehat{\mathcal{D}}=\mathcal{G}_{12}\ast_{M}\mathcal{L}_{\mathcal{B}},\label{july17equ016}\\
\widehat{\mathcal{E}}&=\mathcal{R}_{\mathcal{A}}\ast_{N}
\mathcal{E}\ast_{M}\mathcal{L}_{\mathcal{B}},
~\widehat{\mathcal{M}}=\mathcal{R}_{\widehat{\mathcal{A}}}
\ast_{N}\widehat{\mathcal{C}},
~\widehat{\mathcal{N}}=\widehat{\mathcal{D}}\ast_{M}
\mathcal{L}_{\widehat{\mathcal{B}}},
~\widehat{\mathcal{S}}=\widehat{\mathcal{C}}\ast_{N}
\mathcal{L}_{\widehat{\mathcal{M}}},\label{july17equ017}
\end{align}
Before giving the basic theorems, we present a lemma which is helpful to prove the Theorem \ref{thm1}.
\begin{lemma}\label{lemma4}
Let $\mathcal{A}_{11}=\mathcal{R}_{\mathcal{A}_{1}}
\ast_{N}\mathcal{C}_{1},
~\mathcal{B}_{11}=\mathcal{D}_{1}\ast_{M}
\mathcal{L}_{\mathcal{B}_{1}},
~\mathcal{C}_{11}=\mathcal{R}_{\mathcal{A}_{1}}\ast_{N}
\mathcal{F}_{1},
~\mathcal{D}_{11}=\mathcal{G}_{1}\ast_{M}
\mathcal{L}_{\mathcal{B}_{1}},
\mathcal{E}_{11}=\mathcal{R}_{\mathcal{A}_{1}}
\ast_{N}\mathcal{E}_{1}\ast_{M}\mathcal{L}_{\mathcal{B}_{1}},
~\mathcal{M}_{11}=\mathcal{R}_{\mathcal{A}_{11}}
\ast_{N}\mathcal{C}_{11},
~\mathcal{N}_{11}=\mathcal{D}_{11}\ast_{M}
\mathcal{L}_{\mathcal{B}_{11}},
~\mathcal{S}_{11}=\mathcal{C}_{11}\ast_{N}
\mathcal{L}_{\mathcal{M}_{11}}.$
Consider the following tensor equation
\begin{equation}\label{july23equ01}
\mathcal{A}_1\ast_{N}\mathcal{X}_1+
\mathcal{Y}_1\ast_{M}\mathcal{B}_1+\mathcal{C}_1\ast_{N}
\mathcal{Z}_1\ast_{M}\mathcal{D}_1+\mathcal{F}_1\ast_{N}
\mathcal{Z}_2\ast_{M}\mathcal{G}_1=\mathcal{E}_1,
\end{equation}
which is consistent if and only if
\begin{eqnarray}
&\mathcal{R}_{\mathcal{M}_{11}}\ast_{N}
\mathcal{R}_{\mathcal{A}_{11}}\ast_{N}\mathcal{E}_{11}=0,
~\mathcal{E}_{11}\ast_{M}\mathcal{L}_{\mathcal{B}_{11}}
\ast_{M}\mathcal{L}_{\mathcal{N}_{11}}=0,\label{july23equ02}\\
&\mathcal{R}_{\mathcal{A}_{11}}\ast_{N}\mathcal{E}_{11}
\ast_{M}\mathcal{L}_{\mathcal{D}_{11}}=0,
~\mathcal{R}_{\mathcal{C}_{11}}\ast_{N}
\mathcal{E}_{11}\ast_{M}\mathcal{L}_{\mathcal{B}_{11}}=0,\label{july23equ03}
\end{eqnarray}
In this case, the general solution can be expressed as
\begin{align}
\mathcal{X}_{1}&=\mathcal{A}_{1}^{\dag}\ast_{N}(\mathcal{E}_{1}-\mathcal{C}_{1}
\ast_{N}\mathcal{Z}_{1}\ast_{M}
\mathcal{D}_{1}-\mathcal{F}_{1}\ast_{N}\mathcal{Z}_{2}\ast_{M}
\mathcal{G}_{1})-\mathcal{T}_{1}\ast_{M}\mathcal{B}_{1}
+\mathcal{L}_{\mathcal{A}_{1}}\ast_{N}\mathcal{T}_{2},\label{july23equ04}\\
\mathcal{Y}_{1}&=\mathcal{R}_{\mathcal{A}_{1}}\ast_{N}
(\mathcal{E}_{1}-\mathcal{C}_{1}\ast_{N}\mathcal{Z}_{1}
\ast_{M}\mathcal{D}_{1}-
\mathcal{F}_{1}\ast_{N}\mathcal{Z}_{2}\ast_{M}\mathcal{G}_{1})
\ast_{M}\mathcal{B}_{1}^{\dag}
+\mathcal{A}_{1}\ast_{N}\mathcal{T}_{1}+
\mathcal{T}_{3}
\ast_{M}\mathcal{R}_{\mathcal{B}_{1}},\label{july23equ05}\\
\mathcal{Z}_1&=\mathcal{A}_{11}^{\dagger}\ast_{N}\mathcal{E}_{11}
\ast_{M}\mathcal{B}_{11}^{\dagger}-\mathcal{A}_{11}^{\dagger}
\ast_{N}\mathcal{C}_{11}\ast_{N}
\mathcal{M}_{11}^{\dagger}\ast_{N}\mathcal{E}_{11}\ast_{M}
\mathcal{B}_{11}^{\dagger}
-\mathcal{A}_{11}^{\dagger}\ast_{N}\mathcal{S}_{11}\ast_{N}
\mathcal{C}_{11}^{\dagger}
\ast_{N}\mathcal{E}_{11}\ast_{M}\mathcal{N}_{11}^{\dagger}
\ast_{M}\nonumber\\
&\mathcal{D}_{11}\ast_{M}\mathcal{B}_{11}^{\dagger}-
\mathcal{A}_{11}^{\dagger}
\ast_{N}\mathcal{S}_{11}\ast_{N}
\mathcal{T}_{4}\ast_{M}\mathcal{R}_{\mathcal{N}_{11}}\ast_{M}\mathcal{D}_{ii}
\ast_{M}\mathcal{B}_{11}^{\dagger}+
\mathcal{L}_{\mathcal{A}_{11}}\ast_{N}\mathcal{T}_{5}
+\mathcal{T}_{6}\ast_{M}\mathcal{R}_{\mathcal{B}_{11}},\label{july23equ06}
\\
\mathcal{Z}_{2}&=\mathcal{M}_{11}^{\dagger}\ast_{N}
\mathcal{E}_{11}\ast_{M}
\mathcal{D}_{11}^{\dagger}
+\mathcal{S}_{11}^{\dagger}\ast_{N}\mathcal{S}_{11}\ast_{N}
\mathcal{C}_{11}^{\dagger}
\ast_{N}\mathcal{E}_{11}\ast_{M}\mathcal{N}_{11}^{\dagger}
+\mathcal{L}_{\mathcal{M}_{11}}\ast_{N}\mathcal{L}_{\mathcal{S}_{11}}
\ast_{N}\mathcal{T}_{7}
+\mathcal{L}_{\mathcal{M}_{11}}\ast_{N}\mathcal{T}_{4}\nonumber\\
&\ast_{M}\mathcal{R}_{\mathcal{N}_{11}}+
\mathcal{T}_{8}\ast_{M}\mathcal{R}_{\mathcal{D}_{11}},\label{july23equ07}
\end{align}
where $T_{i}~(i=1,\ldots,8)$ are arbitrary quaternion tensors over $\mathbb{H}$.
\end{lemma}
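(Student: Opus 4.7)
The plan is to reduce the four-variable equation (\ref{july23equ01}) to the two-variable generalized Sylvester equation covered by \lemref{lemma1}, and then recover $\mathcal{X}_1,\mathcal{Y}_1$ from the residual equation $\mathcal{A}_1\ast_{N}\mathcal{X}_1+\mathcal{Y}_1\ast_{M}\mathcal{B}_1=\mathcal{E}'$. The key observation is that left-multiplying by $\mathcal{R}_{\mathcal{A}_1}$ and right-multiplying by $\mathcal{L}_{\mathcal{B}_1}$ annihilates both $\mathcal{A}_1\ast_{N}\mathcal{X}_1$ and $\mathcal{Y}_1\ast_{M}\mathcal{B}_1$, because $\mathcal{R}_{\mathcal{A}_1}\ast_{N}\mathcal{A}_1=0$ and $\mathcal{B}_1\ast_{M}\mathcal{L}_{\mathcal{B}_1}=0$ by Proposition 2.3(4).

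For necessity, I will apply these two projectors to (\ref{july23equ01}); using the abbreviations (\ref{july17equ001})--(\ref{july17equ003}) at $i=1$, the result is
\[
\mathcal{A}_{11}\ast_{N}\mathcal{Z}_1\ast_{M}\mathcal{B}_{11}
+\mathcal{C}_{11}\ast_{N}\mathcal{Z}_2\ast_{M}\mathcal{D}_{11}
=\mathcal{E}_{11},
\]
which is exactly the two-term generalized Sylvester equation (\ref{e3}). Applying \lemref{lemma1} with the identifications $P\mapsto\mathcal{M}_{11}$, $Q\mapsto\mathcal{N}_{11}$, $S\mapsto\mathcal{S}_{11}$ forces the four consistency conditions (\ref{july23equ02})--(\ref{july23equ03}) and produces the parametrization (\ref{july23equ06})--(\ref{july23equ07}) for $\mathcal{Z}_1,\mathcal{Z}_2$.

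For sufficiency, I will assume (\ref{july23equ02})--(\ref{july23equ03}) and run the argument in reverse. \lemref{lemma1} yields tensors $\mathcal{Z}_1,\mathcal{Z}_2$ solving the reduced equation. Setting $\mathcal{E}':=\mathcal{E}_1-\mathcal{C}_1\ast_{N}\mathcal{Z}_1\ast_{M}\mathcal{D}_1-\mathcal{F}_1\ast_{N}\mathcal{Z}_2\ast_{M}\mathcal{G}_1$, the reduced equation is exactly $\mathcal{R}_{\mathcal{A}_1}\ast_{N}\mathcal{E}'\ast_{M}\mathcal{L}_{\mathcal{B}_1}=0$, which is the consistency criterion for the residual equation $\mathcal{A}_1\ast_{N}\mathcal{X}_1+\mathcal{Y}_1\ast_{M}\mathcal{B}_1=\mathcal{E}'$. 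Its general solution, obtained either by a direct argument or by specializing \lemref{lemma1} with identity outer factors, gives (\ref{july23equ04})--(\ref{july23equ05}); the verification hinges on the identity $\mathcal{A}_1\ast_{N}\mathcal{A}_1^{\dag}\ast_{N}\mathcal{E}'+\mathcal{R}_{\mathcal{A}_1}\ast_{N}\mathcal{E}'\ast_{M}\mathcal{B}_1^{\dag}\ast_{M}\mathcal{B}_1=\mathcal{E}'$, valid precisely under the consistency condition.

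The main obstacle I anticipate is bookkeeping: confirming that the stated parametrization is \emph{complete} (every solution arises for some choice of $\mathcal{T}_1,\ldots,\mathcal{T}_8$) and that the free parameters do not hide additional constraints. Completeness on the $(\mathcal{Z}_1,\mathcal{Z}_2)$ side is inherited from \lemref{lemma1}. For $(\mathcal{X}_1,\mathcal{Y}_1)$ it reduces to showing that the homogeneous system $\mathcal{A}_1\ast_{N}\mathcal{X}_1+\mathcal{Y}_1\ast_{M}\mathcal{B}_1=0$ has general solution $\mathcal{X}_1=-\mathcal{T}_1\ast_{M}\mathcal{B}_1+\mathcal{L}_{\mathcal{A}_1}\ast_{N}\mathcal{T}_2$, $\mathcal{Y}_1=\mathcal{A}_1\ast_{N}\mathcal{T}_1+\mathcal{T}_3\ast_{M}\mathcal{R}_{\mathcal{B}_1}$, which follows from \lemref{lemma0} applied to $\mathcal{A}_1\ast_{N}\mathcal{X}_1=-\mathcal{Y}_1\ast_{M}\mathcal{B}_1$ after a standard renaming of parameters.
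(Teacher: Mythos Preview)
Your proposal is correct and follows essentially the same route as the paper: project (\ref{july23equ01}) by $\mathcal{R}_{\mathcal{A}_1}$ on the left and $\mathcal{L}_{\mathcal{B}_1}$ on the right to obtain the two-term equation $\mathcal{A}_{11}\ast_{N}\mathcal{Z}_1\ast_{M}\mathcal{B}_{11}+\mathcal{C}_{11}\ast_{N}\mathcal{Z}_2\ast_{M}\mathcal{D}_{11}=\mathcal{E}_{11}$, invoke \lemref{lemma1} for the consistency conditions and the parametrization of $\mathcal{Z}_1,\mathcal{Z}_2$, and then recover $\mathcal{X}_1,\mathcal{Y}_1$ from the residual equation $\mathcal{A}_1\ast_{N}\mathcal{X}_1+\mathcal{Y}_1\ast_{M}\mathcal{B}_1=\mathcal{E}'$. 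Your treatment of completeness for $(\mathcal{X}_1,\mathcal{Y}_1)$ is slightly more explicit than the paper's, but the argument is the same.
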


\begin{proof}
We seperate the left part of equation (\ref{july23equ01}) into two parts $\mathcal{A}_1\ast_{N}\mathcal{X}_1+
\mathcal{Y}_1\ast_{M}\mathcal{B}_{1}$ and $\mathcal{C}_1\ast_{N}
\mathcal{Z}_1\ast_{M}\mathcal{D}_1+\mathcal{F}_1\ast_{N}
\mathcal{Z}_2\ast_{M}\mathcal{G}_1$, we have
\begin{equation}\label{july23equ08}
\mathcal{A}_1\ast_{N}\mathcal{X}_1+\mathcal{Y}_1\ast_{M}\mathcal{B}_1=
\mathcal{E}_1-\mathcal{C}_1\ast_{N}
\mathcal{Z}_1\ast_{M}\mathcal{D}_1-\mathcal{F}_1\ast_{N}
\mathcal{Z}_2\ast_{M}\mathcal{G}_1.
\end{equation}
Applying Lemma \ref{lemma1} if $\mathcal{B}=\mathcal{D}=0$, equation (\ref{july23equ08}) is consistent if and only if 
\begin{equation}\label{july23equ09}
\mathcal{R}_{\mathcal{A}_{1}}\ast_{N}(\mathcal{E}_1-\mathcal{C}_1\ast_{N}
\mathcal{Z}_1\ast_{M}\mathcal{D}_1-\mathcal{F}_1\ast_{N}
\mathcal{Z}_2\ast_{M}\mathcal{G}_1)\ast_{M}\mathcal{L}_{\mathcal{B}_{1}}=0,
\end{equation} 
$\mathcal{X}_1,\mathcal{Y}_1$ can be written as (\ref{july23equ04}), (\ref{july23equ05}).
That is equalivent to prove
\begin{equation}\label{july23equ10}
\mathcal{A}_{11}\ast_{N}
\mathcal{Z}_1\ast_{M}\mathcal{B}_{11}+\mathcal{C}_{11}\ast_{N}
\mathcal{Z}_2\ast_{M}\mathcal{D}_{11}=\mathcal{E}_{11}
\end{equation}
is consistent. Applying Lemma \ref{lemma1},  that equation (\ref{july23equ10}) is consistent if and only if (\ref{july23equ02}) and (\ref{july23equ03}),  as well as $\mathcal{Z}_1,\mathcal{Z}_2$ can be written  as (\ref{july23equ06}), (\ref{july23equ07}).
	\end{proof}
 When the $N=M=1$, the matrix form of above-mentioned Lemma was given in \cite{zhh0}.
\begin{theorem}\label{thm1}
Consider system (\ref{mainsystem01}).
Then the following statements are equivalent:
\begin{enumerate}
	\item\label{item01} System (\ref{mainsystem01}) is consistent.
	\item\label{item02} The following equalities are satisfied:
	\begin{align}
	&\label{july17equ018}\mathcal{R}_{\mathcal{M}_{ii}}\ast_{N}
	\mathcal{R}_{\mathcal{A}_{ii}}\ast_{N}\mathcal{E}_{ii}=0,
	~\mathcal{E}_{ii}\ast_{M}\mathcal{L}_{\mathcal{B}_{ii}}
	\ast_{M}\mathcal{L}_{\mathcal{N}_{i}}=0,\\
	&\label{july17equ019}\mathcal{R}_{\mathcal{A}_{ii}}\ast_{N}\mathcal{E}_{ii}
	\ast_{M}\mathcal{L}_{\mathcal{D}_{ii}}=0,
	~\mathcal{R}_{\mathcal{C}_{ii}}\ast_{N}
	\mathcal{E}_{ii}\ast_{M}\mathcal{L}_{\mathcal{B}_{ii}}=0, ~(i=1,2,3),\\
	\label{july17equ020}&\mathcal{R}_{\widehat{\mathcal{M}_{j,j+1}}}\ast_{N}
	\mathcal{R}_{\widehat{\mathcal{A}_{j,j+1}}}\ast_{N}
	\widehat{\mathcal{E}_{j,j+1}}=0,
	~\widehat{\mathcal{E}_{j,j+1}}\ast_{M}
	\mathcal{L}_{\widehat{\mathcal{B}_{j,j+1}}}
	\ast_{M}\mathcal{L}_{\widehat{\mathcal{Q}_{j,j+1}}}=0,\\
	\label{july17equ021}&\mathcal{R}_{\widehat{\mathcal{A}_{j,j+1}}}\ast_{N}
	\widehat{\mathcal{E}_{j,j+1}}
	\ast_{M}\mathcal{L}_{\widehat{\mathcal{B}_{j,j+1}}}=0,
	~\mathcal{R}_{\widehat{\mathcal{C}_{j,j+1}}}\ast_{N}
	\widehat{\mathcal{E}_{j,j+1}}\ast_{M}\mathcal{L}_{\widehat{\mathcal{B}_{j,j+1}}}=0, ~(j=1,2),\\
	\label{july17equ022}&\mathcal{R}_{\widehat{\mathcal{M}}}\ast_{N}
	\mathcal{R}_{\widehat{\mathcal{A}}}\ast_{N}\widehat{\mathcal{E}}=0,
	~\widehat{\mathcal{E}}\ast_{M}\mathcal{L}_{\widehat{\mathcal{B}}}
	\ast_{M}\mathcal{L}_{\widehat{\mathcal{Q}}}=0,\\
	\label{july17equ023}&\mathcal{R}_{\widehat{\mathcal{A}}}\ast_{N}
	\widehat{\mathcal{E}}
	\ast_{M}\mathcal{L}_{\widehat{\mathcal{B}}}=0,
	~\mathcal{R}_{\widehat{\mathcal{C}}}\ast_{N}
	\widehat{\mathcal{E}}\ast_{M}\mathcal{L}_{\widehat{\mathcal{B}}}=0,
	\end{align}
\end{enumerate}
Furthermore, if statement (\ref{item01}) holds, then the general solution to (\ref{mainsystem01}) can be expressed as
\begin{align}
\label{july17equ024}
\mathcal{X}_{i}&=\mathcal{A}_{i}^{\dag}\ast_{N}(\mathcal{E}_{i}-\mathcal{C}_{i}
\ast_{N}\mathcal{Z}_{i}\ast_{M}
\mathcal{D}_{i}-\mathcal{F}_{i}\ast_{N}\mathcal{Z}_{i+1}\ast_{M}
\mathcal{G}_{i})-\mathcal{T}_{i1}\ast_{M}\mathcal{B}_{i}
+\mathcal{L}_{\mathcal{A}_{i}}\ast_{N}\mathcal{T}_{i2},&\\
\label{july17equ025}
\mathcal{Y}_{i}&=\mathcal{R}_{\mathcal{A}_{i}}\ast_{N}
(\mathcal{E}_{i}-\mathcal{C}_{i}\ast_{N}\mathcal{Z}_{i}
\ast_{M}\mathcal{D}_{i}-
\mathcal{F}_{i}\ast_{N}\mathcal{Z}_{i+1}\ast_{M}\mathcal{G}_{i})
\ast_{M}\mathcal{B}_{i}^{\dag}
+\mathcal{A}_{i}\ast_{N}\mathcal{T}_{i1}+
\mathcal{T}_{i3}
\ast_{M}\mathcal{R}_{\mathcal{B}_{i}},&\\
\label{july17equ026}
\mathcal{Z}_i&=\mathcal{A}_{ii}^{\dagger}\ast_{N}\mathcal{E}_{ii}\ast_{M}
\mathcal{B}_{ii}^{\dagger}-\mathcal{A}_{ii}^{\dagger}\ast_{N}\mathcal{C}_{ii}\ast_{N}
\mathcal{M}_{ii}^{\dagger}\ast_{N}\mathcal{E}_{ii}\ast_{M}\mathcal{B}_{ii}^{\dagger}
-\mathcal{A}_{ii}^{\dagger}\ast_{N}\mathcal{S}_{ii}\ast_{N}\mathcal{C}_{ii}^{\dagger}
\ast_{N}\mathcal{E}_{ii}\ast_{M}\mathcal{N}_{ii}^{\dagger}\ast_{M}\nonumber\\
&\mathcal{D}_{ii}\ast_{M}\mathcal{B}_{ii}^{\dagger}-\mathcal{A}_{ii}^{\dagger}
\ast_{N}\mathcal{S}_{ii}\ast_{N}
\mathcal{T}_{i4}\ast_{M}\mathcal{R}_{\mathcal{N}_{ii}}\ast_{M}\mathcal{D}_{ii}
\ast_{M}\mathcal{B}_{ii}^{\dagger}+
\mathcal{L}_{\mathcal{A}_{ii}}\ast_{N}\mathcal{T}_{i5}
+\mathcal{T}_{i6}\ast_{M}\mathcal{R}_{\mathcal{B}_{ii}},&\\
\label{july17equ027}
\mathcal{Z}_{i+1}&=\mathcal{M}_{ii}^{\dagger}\ast_{N}
\mathcal{E}_{ii}\ast_{M}
\mathcal{D}_{ii}^{\dagger}
+\mathcal{S}_{ii}^{\dagger}\ast_{N}\mathcal{S}_{ii}\ast_{N}
\mathcal{C}_{ii}^{\dagger}
\ast_{N}\mathcal{E}_{ii}\ast_{M}\mathcal{N}_{ii}^{\dagger}
+\mathcal{L}_{\mathcal{M}_{ii}}\ast_{N}\mathcal{L}_{\mathcal{S}_{ii}}
\ast_{N}\mathcal{T}_{i7}
+\mathcal{L}_{\mathcal{M}_{ii}}\ast_{N}\mathcal{T}_{i4}\nonumber\\
&\ast_{M}\mathcal{R}_{\mathcal{N}_{ii}}+
\mathcal{T}_{i8}\ast_{M}\mathcal{R}_{\mathcal{D}_{ii}},~(i=1,2,3)
\end{align}
where
\begin{align}
\label{july17equ028}
\mathcal{T}_{j7}&=\begin{bmatrix}
                    \mathcal{I}_{m_j} & 0 \\
                  \end{bmatrix}\ast_{N}
[\mathcal{A}_{j,j+1}^{\dag}\ast_{N}(\mathcal{E}_{j,j+1}-\mathcal{C}_{j,j+1}
\ast_{N}\mathcal{T}_{j4}\ast_{M}
\mathcal{D}_{j,j+1}-\mathcal{F}_{j,j+1}\ast_{N}\mathcal{T}_{j+1,4}\ast_{M}
\mathcal{G}_{j,j+1})\nonumber\\
&-\mathcal{U}_{j1}\ast_{M}\mathcal{B}_{j,j+1}
+\mathcal{L}_{\mathcal{A}_{j,j+1}}\ast_{N}\mathcal{U}_{j2}],&\\
\label{july17equ029}
\mathcal{T}_{j+1,5}&=\begin{bmatrix}
                    0 & \mathcal{I}_{m_j} \\
                  \end{bmatrix}\ast_{N}
[\mathcal{A}_{j,j+1}^{\dag}\ast_{N}(\mathcal{E}_{j,j+1}-\mathcal{C}_{j,j+1}
\ast_{N}\mathcal{T}_{j4}\ast_{M}
\mathcal{D}_{j,j+1}-\mathcal{F}_{j,j+1}\ast_{N}\mathcal{T}_{j+1,4}\ast_{M}
\mathcal{G}_{j,j+1})\nonumber\\
&-\mathcal{U}_{j1}\ast_{M}\mathcal{B}_{j,j+1}
+\mathcal{L}_{\mathcal{A}_{j,j+1}}\ast_{N}\mathcal{U}_{j2}],&\\
\label{july17equ030}
\mathcal{T}_{j8}&=[\mathcal{R}_{\mathcal{A}_{j,j+1}}\ast_{N}
(\mathcal{E}_{j,j+1}-\mathcal{C}_{j,j+1}\ast_{N}\mathcal{U}_{12}
\ast_{M}\mathcal{D}_{j,j+1}-
\mathcal{F}_{j,j+1}\ast_{N}\mathcal{U}_{22}\ast_{M}\mathcal{G}_{j,j+1})
\ast_{M}\mathcal{B}_{j,j+1}^{\dag}\nonumber\\
&+\mathcal{A}_{j,j+1}\ast_{N}\mathcal{U}_{j1}+
\mathcal{U}_{j3}
\ast_{M}\mathcal{R}_{\mathcal{B}_{j,j+1}}]\ast_{M}
\begin{bmatrix}
\mathcal{I}_{n_j} \\
0 \\
\end{bmatrix},&\\
\label{july17equ031}
\mathcal{T}_{j+1,6}&=[\mathcal{R}_{\mathcal{A}_{j,j+1}}\ast_{N}
(\mathcal{E}_{j,j+1}-\mathcal{C}_{j,j+1}\ast_{N}\mathcal{U}_{12}
\ast_{M}\mathcal{D}_{j,j+1}-
\mathcal{F}_{j,j+1}\ast_{N}\mathcal{U}_{22}\ast_{M}\mathcal{G}_{j,j+1})
\ast_{M}\mathcal{B}_{j,j+1}^{\dag}\nonumber\\
&+\mathcal{A}_{j,j+1}\ast_{N}\mathcal{U}_{j1}+
\mathcal{U}_{j3}
\ast_{M}\mathcal{R}_{\mathcal{B}_{j,j+1}}]\ast_{M}
\begin{bmatrix}
0\\
\mathcal{I}_{n_j}  \\
\end{bmatrix},&
\end{align}
\begin{align}
\label{july17equ032}
\mathcal{T}_{j4}&=\widehat{\mathcal{A}_{j,j+1}}^{\dagger}\ast_{N}
\widehat{\mathcal{E}_{j,j+1}}\ast_{M}\widehat{\mathcal{B}_{j,j+1}}^{\dagger}
-\widehat{\mathcal{A}_{j,j+1}}^{\dagger}\ast_{N}\widehat{\mathcal{C}_{j,j+1}}\ast_{N}
\widehat{\mathcal{M}_{j,j+1}}^{\dagger}\ast_{N}\widehat{\mathcal{E}_{j,j+1}}
\ast_{M}\mathcal{B}_{j,j+1}^{\dagger}-\nonumber\\
&\widehat{\mathcal{A}_{j,j+1}}^{\dagger}\ast_{N}\widehat{\mathcal{S}_{j,j+1}}
\ast_{N}\widehat{\mathcal{C}_{j,j+1}}^{\dagger}
\ast_{N}\widehat{\mathcal{E}_{j,j+1}}\ast_{M}
\widehat{\mathcal{N}_{j,j+1}}^{\dagger}\ast_{M}
\widehat{\mathcal{D}_{j,j+1}}\ast_{M}
\widehat{\mathcal{B}_{j,j+1}}^{\dagger}
-\widehat{\mathcal{A}_{j,j+1}}^{\dagger}\ast_{N}\nonumber\\
&\widehat{\mathcal{S}_{j,j+1}}\ast_{N}
\widehat{\mathcal{U}_{j4}}\ast_{M}\mathcal{R}_{\widehat{\mathcal{N}_{j,j+1}}}\ast_{M}
\widehat{\mathcal{D}_{j,j+1}}
\ast_{M}\widehat{\mathcal{B}_{j,j+1}}^{\dagger}
+\mathcal{L}_{\widehat{\mathcal{A}_{j,j+1}}}\ast_{N}\mathcal{U}_{j5}
+\mathcal{U}_{j6}\ast_{M}\mathcal{R}_{\widehat{\mathcal{B}_{j,j+1}}},&\\
\label{july17equ033}
\mathcal{T}_{j+1,4}&=\widehat{\mathcal{M}_{j,j+1}}^{\dag}\ast_{N}
\widehat{\mathcal{E}_{j,j+1}}\ast_{N}\widehat{\mathcal{D}_{j,j+1}}^{\dag}
+\widehat{\mathcal{S}_{j,j+1}}^{\dag}\ast_{N}\widehat{\mathcal{S}_{j,j+1}}
\ast_{N}\widehat{\mathcal{C}_{j,j+1}}\ast_{N}
\widehat{\mathcal{E}_{j,j+1}}\ast_{M}
\widehat{\mathcal{N}_{j,j+1}}^{\dag}+\nonumber\\
&\mathcal{L}_{\widehat{\mathcal{M}}_{j,j+1}}\ast_{N}
\mathcal{L}_{\widehat{\mathcal{S}}_{j,j+1}}
\ast_{N}\mathcal{U}_{j7}
+\mathcal{L}_{\widehat{\mathcal{M}_{j,j+1}}}\ast_{N}
\mathcal{U}_{j4}\ast_{M}\mathcal{R}_{\widehat{\mathcal{N}_{j,j+1}}}+
\mathcal{U}_{j8}\ast_{M}\mathcal{R}_{\widehat{\mathcal{D}_{j,j+1}}},
~(j=1,2)&\\
\label{july17equ034}
\mathcal{U}_{17}&=\begin{bmatrix}
                    \mathcal{I}_{s} & 0 \\
                  \end{bmatrix}\ast_{N}
[\mathcal{A}^{\dag}\ast_{N}(\mathcal{E}-\mathcal{C}
\ast_{N}\mathcal{U}_{14}\ast_{M}
\mathcal{D}-\mathcal{F}\ast_{N}\mathcal{U}_{24}\ast_{M}
\mathcal{G})
-\mathcal{V}_{1}\ast_{M}\mathcal{B}
+\mathcal{L}_{\mathcal{A}}\ast_{N}\mathcal{V}_{2}],&\\
\label{july17equ035}
\mathcal{U}_{25}&=\begin{bmatrix}
                    0 & \mathcal{I}_{s} \\
                  \end{bmatrix}\ast_{N}
[\mathcal{A}^{\dag}\ast_{N}(\mathcal{E}-\mathcal{C}
\ast_{N}\mathcal{U}_{14}\ast_{M}
\mathcal{D}-\mathcal{F}\ast_{N}\mathcal{U}_{24}\ast_{M}
\mathcal{G})
-\mathcal{V}_{1}\ast_{M}\mathcal{B}
+\mathcal{L}_{\mathcal{A}}\ast_{N}\mathcal{V}_{3}],&\\
\label{july17equ036}
\mathcal{U}_{18}&=[\mathcal{R}_{\mathcal{A}}\ast_{N}
(\mathcal{E}-\mathcal{C}\ast_{N}\mathcal{U}_{14}
\ast_{M}\mathcal{D}-
\mathcal{F}\ast_{N}\mathcal{U}_{24}\ast_{M}\mathcal{G})
\ast_{M}\mathcal{B}^{\dag}
+\mathcal{A}\ast_{N}\mathcal{U}_{j1}+
\mathcal{U}_{j3}
\ast_{M}\mathcal{R}_{\mathcal{B}}]\nonumber\\
&\ast_{M}
\begin{bmatrix}
\mathcal{I}_{t} \\
0 \\
\end{bmatrix},&\\
\label{july17equ037}
\mathcal{U}_{26}&=[\mathcal{R}_{\mathcal{A}}\ast_{N}
(\mathcal{E}-\mathcal{C}\ast_{N}\mathcal{U}_{14}
\ast_{M}\mathcal{D}-
\mathcal{F}\ast_{N}\mathcal{U}_{24}\ast_{M}\mathcal{G})
\ast_{M}\mathcal{B}^{\dag}
+\mathcal{A}\ast_{N}\mathcal{U}_{j1}+
\mathcal{U}_{j3}
\ast_{M}\mathcal{R}_{\mathcal{B}}]\nonumber\\
&\ast_{M}
\begin{bmatrix}
0\\
\mathcal{I}_{t}  \\
\end{bmatrix},&\\
\label{july17equ038}
\mathcal{U}_{14}&=\widehat{\mathcal{A}}^{\dagger}\ast_{N}
\widehat{\mathcal{E}}\ast_{M}\widehat{\mathcal{B}}^{\dagger}
-\widehat{\mathcal{A}}^{\dagger}\ast_{N}\widehat{\mathcal{C}}\ast_{N}
\widehat{\mathcal{M}}^{\dagger}\ast_{N}\widehat{\mathcal{E}}
\ast_{M}\mathcal{B}^{\dagger}-
\widehat{\mathcal{A}}^{\dagger}\ast_{N}\widehat{\mathcal{S}}
\ast_{N}\widehat{\mathcal{C}}^{\dagger}
\ast_{N}\widehat{\mathcal{E}}\ast_{M}
\widehat{\mathcal{N}}^{\dagger}\ast_{M}
\widehat{\mathcal{D}}\ast_{M}
\widehat{\mathcal{B}}^{\dagger}\nonumber\\
&-\widehat{\mathcal{A}}^{\dagger}\ast_{N}
\widehat{\mathcal{S}}\ast_{N}
\mathcal{V}_{4}\ast_{M}
\mathcal{R}_{\widehat{\mathcal{N}}}\ast_{M}
\widehat{\mathcal{D}}
\ast_{M}\widehat{\mathcal{B}}^{\dagger}
+\mathcal{L}_{\widehat{\mathcal{A}}}\ast_{N}\mathcal{V}_{5}
+\mathcal{V}_{6}\ast_{M}\mathcal{R}_{\widehat{\mathcal{B}}},&\\
\label{july17equ039}
\mathcal{U}_{24}&=\widehat{\mathcal{M}}^{\dag}\ast_{N}
\widehat{\mathcal{E}}\ast_{N}\widehat{\mathcal{D}}^{\dag}
+\widehat{\mathcal{S}}^{\dag}\ast_{N}\widehat{\mathcal{S}}
\ast_{N}\widehat{\mathcal{C}}\ast_{N}
\widehat{\mathcal{E}}\ast_{M}
\widehat{\mathcal{N}}^{\dag}+
\mathcal{L}_{\widehat{\mathcal{M}}}\ast_{N}
\mathcal{L}_{\widehat{\mathcal{S}}}
\ast_{N}\mathcal{V}_{7}
+\mathcal{L}_{\widehat{\mathcal{M}}}\ast_{N}
\mathcal{V}_{4}\ast_{M}
\mathcal{R}_{\widehat{\mathcal{N}}}\nonumber\\
&+\mathcal{V}_{8}\ast_{M}\mathcal{R}_{\widehat{\mathcal{D}}},
\end{align}
and $\mathcal{T}_{i1},~\mathcal{T}_{i2},~\mathcal{T}_{i3},
~\mathcal{U}_{j1},~\mathcal{U}_{j2},~\mathcal{U}_{j3},~
~\mathcal{T}_{14},~\mathcal{T}_{15},~
\mathcal{T}_{16},~\mathcal{T}_{37},~\mathcal{T}_{38},~ \mathcal{U}_{15},~\mathcal{U}_{16},~~\mathcal{U}_{27},$
$~\mathcal{U}_{28},
~\mathcal{V}_{t},$ are arbitrary tensors with appropriate sizes over $\mathbb{H}$, $m_j,s$ is the same as the column block of $\mathcal{F}_j,\mathcal{F}_2$, respectively,  $n_j,t$ is the same as the row block of $\mathcal{G}_j,\mathcal{G}_2$, respectively (i=1,2,3; j=1,2; ~t=1,\ldots,8).
\end{theorem}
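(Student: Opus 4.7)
The plan is to exploit \lemref{lemma4} (which solves a single equation of the type appearing in System (\ref{mainsystem01})) together with \lemref{lemma1} (for generalized Sylvester tensor equations) in a three-stage bootstrapping argument. The three equations of (\ref{mainsystem01}) share the variables $\mathcal{Z}_{2}$ (equations $1,2$) and $\mathcal{Z}_{3}$ (equations $2,3$); each stage forces one layer of this sharing by reducing the resulting compatibility requirement to a new generalized Sylvester equation, to which the same pair of lemmas applies again.

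\emph{Stage 1.} I would first treat each of the three equations individually. Applying \lemref{lemma4} to the $i$-th equation yields, for each $i\in\{1,2,3\}$, that this single equation is consistent if and only if the four conditions (\ref{july17equ018})--(\ref{july17equ019}) (with the index $i$ fixed) hold, and then $\mathcal{X}_i,\mathcal{Y}_i,\mathcal{Z}_i,\mathcal{Z}_{i+1}$ admit the parametric representations (\ref{july17equ024})--(\ref{july17equ027}) in free tensors $\mathcal{T}_{i1},\ldots,\mathcal{T}_{i8}$. This accounts for the first block of conditions and the outer form of the general solution.

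\emph{Stage 2.} Next I impose that equations $j$ and $j+1$ agree on $\mathcal{Z}_{j+1}$. Substituting the two expressions for $\mathcal{Z}_{j+1}$ obtained in Stage 1 (one from (\ref{july17equ027}) with index $j$, the other from (\ref{july17equ026}) with index $j+1$) and cancelling the deterministic parts against each other, the matching requirement collapses, for $j=1,2$, to a generalized Sylvester equation of the form
$$
\mathcal{A}_{j,j+1}\ast_{N}\mathcal{W}_{1}+\mathcal{W}_{2}\ast_{M}\mathcal{B}_{j,j+1}+\mathcal{C}_{j,j+1}\ast_{N}\mathcal{T}_{j4}\ast_{M}\mathcal{D}_{j,j+1}+\mathcal{F}_{j,j+1}\ast_{N}\mathcal{T}_{j+1,4}\ast_{M}\mathcal{G}_{j,j+1}=\mathcal{E}_{j,j+1},
$$
with coefficients as defined in (\ref{july17equ004})--(\ref{july17equ007}); the block form of $\mathcal{A}_{j,j+1},\mathcal{B}_{j,j+1}$ is exactly what absorbs the row/column projectors $\mathcal{L}_{\mathcal{M}_{jj}}\ast_N\mathcal{L}_{\mathcal{S}_{jj}}$, $\mathcal{L}_{\mathcal{A}_{j+1,j+1}}$, $\mathcal{R}_{\mathcal{D}_{jj}}$ and $\mathcal{R}_{\mathcal{B}_{j+1,j+1}}$ appearing when one equates the two representations. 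Applying \lemref{lemma4} once more to this reduced equation gives the second block of conditions (\ref{july17equ020})--(\ref{july17equ021}) and, when they are satisfied, the formulas (\ref{july17equ028})--(\ref{july17equ031}) for $\mathcal{T}_{j7},\mathcal{T}_{j+1,5},\mathcal{T}_{j8},\mathcal{T}_{j+1,6}$ together with the parametric form (\ref{july17equ032})--(\ref{july17equ033}) of $\mathcal{T}_{j4},\mathcal{T}_{j+1,4}$.

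\emph{Stage 3.} The two applications of Stage 2 (for $j=1$ and $j=2$) each produce an expression for $\mathcal{T}_{24}$; forcing these to coincide is a further compatibility condition, which the same manipulation reduces to a generalized Sylvester equation with coefficients $\mathcal{A},\mathcal{B},\mathcal{C},\mathcal{D},\mathcal{F},\mathcal{G},\mathcal{E}$ as defined in (\ref{july17equ012})--(\ref{july17equ015}). A final invocation of \lemref{lemma4} yields the third block of conditions (\ref{july17equ022})--(\ref{july17equ023}) and the closed-form expressions (\ref{july17equ034})--(\ref{july17equ039}) for the remaining free parameters. Reading the chain of equivalences backward shows that the conjunction of (\ref{july17equ018})--(\ref{july17equ023}) is necessary and sufficient for consistency of (\ref{mainsystem01}), and substituting the Stage-3 formulas into the Stage-2 formulas and those into the Stage-1 formulas produces the stated general solution.

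The main technical obstacle will be Stage 2: showing that the difference of the two Stage-1 expressions for $\mathcal{Z}_{j+1}$ is \emph{exactly} a Sylvester equation of the form handled by \lemref{lemma4}, with the precise coefficients recorded in (\ref{july17equ004})--(\ref{july17equ007}). This requires careful use of the identities $\mathcal{A}_{ii}^{\dagger}\ast_{N}\mathcal{L}_{\mathcal{A}_{ii}}=0$, $\mathcal{R}_{\mathcal{A}_{ii}}\ast_{N}\mathcal{A}_{ii}=0$ and their analogues from Proposition~2.2 to annihilate all non-free contributions and to rewrite the surviving projector-containing terms as the block-column/row coefficient tensors $\mathcal{A}_{j,j+1}$ and $\mathcal{B}_{j,j+1}$. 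Once this reduction is verified for $j=1,2$, Stage 3 is formally identical with the $\widehat{(\cdot)}$-hatted coefficients playing the analogous role, so no new ideas are needed beyond what was established in Stage 2.
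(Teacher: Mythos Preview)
Your proposal is correct and follows essentially the same three-stage approach as the paper: apply \lemref{lemma4} to each equation separately, equate the two representations of $\mathcal{Z}_{j+1}$ to obtain the block Sylvester equations (\ref{july17equ047})--(\ref{july17equ048}), and then equate the two resulting expressions for $\mathcal{T}_{24}$ to obtain (\ref{july17equ051}). The paper's proof is in fact terser than your outline at the Stage~2 reduction (it simply asserts the equations (\ref{july17equ047})--(\ref{july17equ048}) without the projector bookkeeping you flag as the main obstacle), so your plan already covers everything the published argument contains.
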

\begin{proof}
(\ref{item01}) $\Longleftrightarrow$ (\ref{item02}).

We separate the tensor system into three groups
\begin{align}
\label{july17equ040}\mathcal{A}_1\ast_{N}\mathcal{X}_1+
\mathcal{Y}_1\ast_{M}\mathcal{B}_1+\mathcal{C}_1\ast_{N}
\mathcal{Z}_1\ast_{M}\mathcal{D}_1+\mathcal{F}_1\ast_{N}
\mathcal{Z}_2\ast_{M}\mathcal{G}_1=\mathcal{E}_1,\\
\label{july17equ041}\mathcal{A}_2\ast_{N}\mathcal{X}_2+
\mathcal{Y}_2\ast_{M}\mathcal{B}_2+\mathcal{C}_2\ast_{N}
\mathcal{Z}_2\ast_{M}\mathcal{D}_2+\mathcal{F}_2\ast_{N}
\mathcal{Z}_3\ast_{M}\mathcal{G}_2=\mathcal{E}_2,
\end{align}
and
\begin{align}
\label{july17equ042}\mathcal{A}_3\ast_{N}\mathcal{X}_3+
\mathcal{Y}_3\ast_{M}\mathcal{B}_3+\mathcal{C}_3\ast_{N}
\mathcal{Z}_3\ast_{M}\mathcal{D}_3+\mathcal{F}_3\ast_{N}
\mathcal{Z}_4\ast_{M}\mathcal{G}_3=\mathcal{E}_3.
\end{align}

It follows from the Lemma \ref{lemma4} that (\ref{july17equ040}), (\ref{july17equ041}) and (\ref{july17equ042}) are consistent, respectively, if and only if (\ref{july17equ018}) and (\ref{july17equ019}) hold. The general solution of (\ref{july17equ040}), (\ref{july17equ041}), (\ref{july17equ042}) can be expressed as (\ref{july17equ024}), (\ref{july17equ025}), (\ref{july17equ026}), (\ref{july17equ027}),
where $\mathcal{T}_{i1}\cdots\mathcal{T}_{i8},~i=1,2,3$ are arbitrary appropriate size  tensor over $\mathbb{H}$.

Let $\mathcal{Z}_2$ in (\ref{july17equ027}) (when $i=1$) be equal to $\mathcal{Z}_2$ in (\ref{july17equ026}) (when $i=2$) and $\mathcal{Z}_3$ in (\ref{july17equ027}) (when $i=2$) be equal to $\mathcal{Z}_3$ in (\ref{july17equ026}) (when $i=3$).
Note
\begin{itemize}
  \item When $i=1$,
\begin{align}\label{july17equ043}
\mathcal{Z}_{i+1}(\mathcal{Z}_{2})&=\mathcal{M}_{11}^{\dagger}\ast_{N}
\mathcal{E}_{11}\ast_{M}
\mathcal{D}_{11}^{\dagger}
+\mathcal{S}_{11}^{\dagger}\ast_{N}\mathcal{S}_{11}\ast_{N}
\mathcal{C}_{11}^{\dagger}
\ast_{N}\mathcal{E}_{11}\ast_{M}\mathcal{N}_{11}^{\dagger}
+\mathcal{L}_{\mathcal{M}_{11}}\ast_{N}\mathcal{L}_{\mathcal{S}_{11}}
\ast_{N}\mathcal{T}_{27}\nonumber\\
&+\mathcal{L}_{\mathcal{M}_{11}}\ast_{N}\mathcal{T}_{24}
\ast_{M}\mathcal{R}_{\mathcal{N}_{11}}+
\mathcal{T}_{28}\ast_{M}\mathcal{R}_{\mathcal{D}_{11}},&
\end{align}
\item When $i=2$,
\begin{align}\label{july17equ044}
\mathcal{Z}_{i}(\mathcal{Z}_2)&=\mathcal{A}_{22}^{\dagger}\ast_{N}\mathcal{E}_{22}\ast_{M}
\mathcal{B}_{22}^{\dagger}-\mathcal{A}_{22}^{\dagger}\ast_{N}
\mathcal{C}_{22}\ast_{N}
\mathcal{M}_{22}^{\dagger}\ast_{N}\mathcal{E}_{22}\ast_{M}
\mathcal{B}_{22}^{\dagger}
-\mathcal{A}_{22}^{\dagger}\ast_{N}\mathcal{S}_{22}\ast_{N}
\mathcal{C}_{22}^{\dagger}
\ast_{N}\mathcal{E}_{22}\nonumber\\
&\ast_{M}\mathcal{N}_{22}^{\dagger}
\ast_{M}
\mathcal{D}_{22}\ast_{M}\mathcal{B}_{22}^{\dagger}-\mathcal{A}_{22}^{\dagger}
\ast_{N}\mathcal{S}_{22}\ast_{N}
\mathcal{T}_{24}\ast_{M}\mathcal{R}_{\mathcal{N}_{22}}\ast_{M}\mathcal{D}_{22}
\ast_{M}\mathcal{B}_{22}^{\dagger}+
\mathcal{L}_{\mathcal{A}_{22}}\ast_{N}\mathcal{T}_{25}\nonumber\\
&+\mathcal{T}_{26}\ast_{M}\mathcal{R}_{\mathcal{B}_{22}},
\end{align}
\begin{align}\label{july17equ045}
\mathcal{Z}_{i+1}(\mathcal{Z}_{3})&=\mathcal{M}_{22}^{\dagger}\ast_{N}
\mathcal{E}_{22}\ast_{M}
\mathcal{D}_{22}^{\dagger}
+\mathcal{S}_{22}^{\dagger}\ast_{N}\mathcal{S}_{22}\ast_{N}
\mathcal{C}_{22}^{\dagger}
\ast_{N}\mathcal{E}_{22}\ast_{M}\mathcal{N}_{22}^{\dagger}
+\mathcal{L}_{\mathcal{M}_{22}}\ast_{N}\mathcal{L}_{\mathcal{S}_{22}}
\ast_{N}\mathcal{T}_{37}\nonumber\\
&+\mathcal{L}_{\mathcal{M}_{22}}\ast_{N}\mathcal{T}_{34}
\ast_{M}\mathcal{R}_{\mathcal{N}_{22}}+
\mathcal{T}_{38}\ast_{M}\mathcal{R}_{\mathcal{D}_{22}},
\end{align}
\item When $i=3$,
\begin{align}\label{july17equ046}
\mathcal{Z}_i(\mathcal{Z}_3)&=\mathcal{A}_{33}^{\dagger}\ast_{N}\mathcal{E}_{33}\ast_{M}
\mathcal{B}_{33}^{\dagger}-\mathcal{A}_{33}^{\dagger}\ast_{N}
\mathcal{C}_{33}\ast_{N}
\mathcal{M}_{33}^{\dagger}\ast_{N}\mathcal{E}_{33}\ast_{M}
\mathcal{B}_{33}^{\dagger}
-\mathcal{A}_{33}^{\dagger}\ast_{N}\mathcal{S}_{33}\ast_{N}
\mathcal{C}_{33}^{\dagger}
\ast_{N}\mathcal{E}_{33}\nonumber\\
&\ast_{M}\mathcal{N}_{33}^{\dagger}
\ast_{M}
\mathcal{D}_{33}\ast_{M}\mathcal{B}_{33}^{\dagger}-\mathcal{A}_{33}^{\dagger}
\ast_{N}\mathcal{S}_{33}\ast_{N}
\mathcal{T}_{34}\ast_{M}\mathcal{R}_{\mathcal{N}_{33}}\ast_{M}\mathcal{D}_{33}
\ast_{M}\mathcal{B}_{33}^{\dagger}+
\mathcal{L}_{\mathcal{A}_{33}}\ast_{N}\mathcal{T}_{35}\nonumber\\
&+\mathcal{T}_{36}\ast_{M}\mathcal{R}_{\mathcal{B}_{33}}.
\end{align}
\end{itemize}
Then equating (\ref{july17equ043}),  (\ref{july17equ044}) and (\ref{july17equ045}),  (\ref{july17equ046}), respectively. We have the following
\begin{equation}\label{july17equ047}
\begin{split}
\mathcal{A}_{12}\ast_{N}\begin{bmatrix}
\mathcal{T}_{17} \\
\mathcal{T}_{25} \\
\end{bmatrix}+\begin{bmatrix}
\mathcal{T}_{18} & \mathcal{T}_{26} \\
\end{bmatrix}\ast_{M}\mathcal{B}_{12}+\mathcal{C}_{12}\ast_{N}
\mathcal{T}_{14}\ast_{M}\mathcal{D}_{12}+
\mathcal{F}_{12}\ast_{N}\mathcal{T}_{24}\ast_{M}\mathcal{G}_{12}
=\mathcal{E}_{12},
\end{split}
\end{equation}
and
\begin{equation}\label{july17equ048}
\mathcal{A}_{23}\ast_{N}\begin{bmatrix}
\mathcal{T}_{27} \\
\mathcal{T}_{35} \\
\end{bmatrix}+\begin{bmatrix}
\mathcal{T}_{28} & \mathcal{T}_{36} \\
\end{bmatrix}\ast_{M}\mathcal{B}_{23}+\mathcal{C}_{23}\ast_{N}
\mathcal{T}_{24}\ast_{M}\mathcal{D}_{23}+
\mathcal{F}_{23}\ast_{N}\mathcal{T}_{34}\ast_{M}\mathcal{G}_{23}
=\mathcal{E}_{23},
\end{equation}
where $\mathcal{A}_{j,j+1},\mathcal{B}_{j,j+1},\mathcal{C}_{j,j+1},\mathcal{D}_{j,j+1},\mathcal{F}_{j,j+1},\mathcal{E}_{j,j+1}$ are given in $ (\ref{july17equ004}) -(\ref{july17equ007})( j=1,2)$. 

Firstly, we consider the solvability conditions and general solution to the equation (\ref{july17equ047}) and (\ref{july17equ048}),  respectively. 

It follows from Lemma \ref{lemma4} that the equations  (\ref{july17equ047}) and (\ref{july17equ048}) are consistent if and only if (\ref{july17equ020}) -- (\ref{july17equ021}) and the general solution to equations (\ref{july17equ047}) and (\ref{july17equ048}) can be written as (\ref{july17equ028}) -- (\ref{july17equ033}) (j=1,2).

From above two cases, we obtain two expressions of $\mathcal{T}_{24}$  as following
\begin{itemize}
	\item When $j=1$,
	\begin{align}
	\label{july17equ049}
	\mathcal{T}_{j+1,4}(\mathcal{T}_{24})&=\widehat{\mathcal{M}_{12}}^{\dag}\ast_{N}
	\widehat{\mathcal{E}_{12}}\ast_{N}\widehat{\mathcal{D}_{12}}^{\dag}
	+\widehat{\mathcal{S}_{12}}^{\dag}\ast_{N}\widehat{\mathcal{S}_{12}}
	\ast_{N}\widehat{\mathcal{C}_{12}}\ast_{N}
	\widehat{\mathcal{E}_{12}}\ast_{M}
	\widehat{\mathcal{N}_{12}}^{\dag}+
	\mathcal{L}_{\widehat{\mathcal{M}}_{12}}\ast_{N}
	\mathcal{L}_{\widehat{\mathcal{S}}_{12}}
	\ast_{N}\mathcal{U}_{17}\nonumber\\
	&+\mathcal{L}_{\widehat{\mathcal{M}_{12}}}\ast_{N}
	\mathcal{U}_{14}\ast_{M}\mathcal{R}_{\widehat{\mathcal{N}_{12}}}+
	\mathcal{U}_{18}\ast_{M}\mathcal{R}_{\widehat{\mathcal{D}_{12}}},
	\end{align}
	\item When $j=2$,
	\begin{align}
	\label{july17equ050}
	\mathcal{T}_{j,4}(\mathcal{T}_{24})&=\widehat{\mathcal{A}_{23}}^{\dagger}\ast_{N}
	\widehat{\mathcal{E}_{23}}\ast_{M}\widehat{\mathcal{B}_{23}}^{\dagger}
	-\widehat{\mathcal{A}_{23}}^{\dagger}\ast_{N}
	\widehat{\mathcal{C}_{23}}\ast_{N}
	\widehat{\mathcal{M}_{23}}^{\dagger}\ast_{N}
	\widehat{\mathcal{E}_{23}}
	\ast_{M}\mathcal{B}_{23}^{\dagger}-
	\widehat{\mathcal{A}_{23}}^{\dagger}\ast_{N}\widehat{\mathcal{S}_{23}}
	\ast_{N}\widehat{\mathcal{C}_{23}}^{\dagger}\nonumber\\
	&\ast_{N}\widehat{\mathcal{E}_{23}}\ast_{M}
	\widehat{\mathcal{N}_{23}}^{\dagger}\ast_{M}
	\widehat{\mathcal{D}_{23}}\ast_{M}
	\widehat{\mathcal{B}_{23}}^{\dagger}
	-\widehat{\mathcal{A}_{23}}^{\dagger}\ast_{N}
	\widehat{\mathcal{S}_{23}}\ast_{N}
	\widehat{\mathcal{U}_{24}}\ast_{M}\mathcal{R}_{\widehat{\mathcal{N}_{j,j+1}}}\ast_{M}
	\widehat{\mathcal{D}_{23}}
	\ast_{M}\widehat{\mathcal{B}_{23}}^{\dagger}\nonumber\\
	&+\mathcal{L}_{\widehat{\mathcal{A}_{23}}}\ast_{N}\mathcal{U}_{25}
	+\mathcal{U}_{26}\ast_{M}\mathcal{R}_{\widehat{\mathcal{B}_{23}}},
	\end{align}
\end{itemize}

Equating (\ref{july17equ049}) and (\ref{july17equ050}), we have
\begin{equation}\label{july17equ051}
\begin{split}
\mathcal{A}\ast_{N}\begin{bmatrix}
               \mathcal{U}_{17} \\
               \mathcal{U}_{25} \\
             \end{bmatrix}+\begin{bmatrix}
  \mathcal{U}_{18} & \mathcal{U}_{26} \\
\end{bmatrix}\ast_{M}\mathcal{B}+\mathcal{C}\ast_{N}
\mathcal{U}_{14}\ast_{M}\mathcal{D}+
\mathcal{F}\ast_{N}\mathcal{U}_{24}\ast_{M}\mathcal{G}
             =\mathcal{E},
\end{split}
\end{equation}
where $\mathcal{A},\mathcal{B},\mathcal{C},\mathcal{D},\mathcal{F},\mathcal{G},\mathcal{F}$ are given by (\ref{july17equ012}) -- (\ref{july17equ015}). Then we consider the sovability condition and general solution to equation (\ref{july17equ051}).

Using Lemma \ref{lemma4} over again,  equation (\ref{july17equ051}) is consistent if and only if (\ref{july17equ022}) and (\ref{july17equ023}) hold. And $\mathcal{U}_{17},~\mathcal{U}_{25},~\mathcal{U}_{18},
~\mathcal{U}_{26},~\mathcal{U}_{14},~\mathcal{U}_{24}$ can be expressed as (\ref{july17equ034})--(\ref{july17equ039}), where $V_{t},~t=1,\ldots,8$ are arbitrary quaternion tensors.
\end{proof}

\section{Solvable conditions and general solution to the System (\ref{mainsystem02})}
In this section, an general solution to System (\ref{mainsystem02}) using Moore-Penrose is given and the necessary and sufficient conditions are investigated.
For simplicity, put
\begin{align}
\mathcal{M}_k&=\mathcal{R}_{\mathcal{A}_k}\ast_{N}\mathcal{C}_k,
~\mathcal{N}_k=\mathcal{D}_k\ast_{M}\mathcal{L}_{\mathcal{B}_k},
~\mathcal{S}_k=\mathcal{C}_k\ast_{N}\mathcal{L}_{\mathcal{M}_k},
~k=1,\ldots,4,&\\
\widehat{\mathcal{A}_{i}}&=\begin{bmatrix}
          \mathcal{L}_{\mathcal{M}_{i}}\ast_{N}
          \mathcal{L}_{\mathcal{S}_{i}} & -\mathcal{L}_{\mathcal{A}_{i+1}} \\
        \end{bmatrix},~\widehat{\mathcal{B}_{i}}=\begin{bmatrix}
                                \mathcal{R}_{\mathcal{D}_{i}} \\
                                -\mathcal{R}_{\mathcal{B}_{i+1}} \\
                              \end{bmatrix},
~\widehat{\mathcal{C}_{i}}=\mathcal{L}_{\mathcal{M}_{i}},
~\widehat{\mathcal{D}_{i}}=\mathcal{R}_{\mathcal{N}_{i}},&\\
\widehat{\mathcal{F}_{i}}&=\mathcal{A}_{i+1}^{\dagger}
\ast_{N}\mathcal{S}_{i+1},
~\widehat{\mathcal{G}_{i}}=\mathcal{R}_{\mathcal{N}_{i+1}}\ast_{M}
\mathcal{D}_{i+1}\ast_{M}\mathcal{B}_{i+1}^{\dagger},&\\
\mathcal{E}_{i}&=\mathcal{M}_{i}^{\dagger}\ast_{N}\mathcal{E}_{i}
\ast_{M}\mathcal{D}_{i}^{\dagger}
+\mathcal{S}_{i}^{\dagger}\ast_{N}\mathcal{S}_{i}\ast_{N}
\mathcal{C}_{i}^{\dagger}\ast_{N}
\mathcal{E}_{i}\ast_{M}\mathcal{N}_{i}^{\dagger}-
\mathcal{A}_{i+1}^{\dagger}\ast_{N}\mathcal{E}_{i+1}\ast_{M}
\mathcal{B}_{i+1}^{\dagger}\nonumber\\
&+\mathcal{A}_{i+1}^{\dagger}\ast_{N}
\mathcal{C}_{i+1}
\ast_{N}\mathcal{M}_{i+1}^{\dagger}\ast_{N}\mathcal{E}_{i+1}\ast_{M}
\mathcal{B}_{i+1}^{\dagger}\nonumber\\
&+\mathcal{A}_{i+1}^{\dagger}
\ast_{N}\mathcal{S}_{i+1}\ast_{N}
\mathcal{C}_{i+1}^{\dagger}\ast_{N}\mathcal{E}_{i+1}\ast_{M}
\mathcal{N}_{i+1}^{\dagger}\ast_{M}\mathcal{D}_{i+1}\ast_{M}
\mathcal{B}_{i+1}^{\dagger},&\\
\mathcal{A}_{ii}&=\mathcal{R}_{\widehat{\mathcal{A}_{i}}}
\ast_{N}\widehat{\mathcal{C}_{i}},
~\mathcal{B}_{ii}=\widehat{\mathcal{D}_{i}}\ast_{M}
\mathcal{L}_{\widehat{\mathcal{B}_{i}}},
~\mathcal{C}_{ii}=\mathcal{R}_{\widehat{\mathcal{A}_{i}}}\ast_{N}
\widehat{\mathcal{F}_{i}},
~\mathcal{D}_{ii}=\widehat{\mathcal{G}_{i}}\ast_{M}
\mathcal{L}_{\widehat{\mathcal{B}_{i}}},&\\
\mathcal{E}_{ii}&=\mathcal{R}_{\widehat{\mathcal{A}_{i}}}
\ast_{N}\widehat{\mathcal{E}_{i}}\ast_{M}
\mathcal{L}_{\widehat{\mathcal{B}_{i}}},
~\mathcal{M}_{ii}=\mathcal{R}_{\mathcal{A}_{ii}}
\ast_{N}\mathcal{C}_{ii},&\\
~\mathcal{N}_{ii}&=\mathcal{D}_{ii}\ast_{M}
\mathcal{L}_{\mathcal{B}_{ii}},
~\mathcal{S}_{ii}=\mathcal{C}_{ii}\ast_{N}
\mathcal{L}_{\mathcal{M}_{ii}},
~i=1,2,3,
\\
\mathcal{A}_{j,j+1}&=\begin{bmatrix}
          \mathcal{L}_{\mathcal{M}_{jj}}\ast_{N}
          \mathcal{L}_{\mathcal{S}_{jj}} & -\mathcal{L}_{\mathcal{A}_{j+1,j+1}} \\
        \end{bmatrix},~\mathcal{B}_{j,j+1}=\begin{bmatrix}
                                \mathcal{R}_{\mathcal{D}_{jj}} \\
                                -\mathcal{R}_{\mathcal{B}_{j+1,j+1}} \\
                              \end{bmatrix},&\\
~\mathcal{C}_{j,j+1}&=\mathcal{L}_{\mathcal{M}_{jj}},
~\mathcal{D}_{j,j+1}=\mathcal{R}_{\mathcal{N}_{jj}},&\\
\mathcal{F}_{j,j+1}&=\mathcal{A}_{j+1,j+1}^{\dagger}
\ast_{N}\mathcal{S}_{j+1,j+1},
~\mathcal{G}_{j,j+1}=\mathcal{R}_{\mathcal{N}_{j+1,j+1}}\ast_{M}
\mathcal{D}_{j+1,j+1}\ast_{M}\mathcal{B}_{j+1,j+1}^{\dagger},&
\end{align}
\begin{align}
\mathcal{E}_{j,j+1}&=\mathcal{M}_{jj}^{\dagger}\ast_{N}\mathcal{E}_{jj}
\ast_{M}\mathcal{D}_{jj}^{\dagger}
+\mathcal{S}_{jj}^{\dagger}\ast_{N}\mathcal{S}_{jj}\ast_{N}
\mathcal{C}_{jj}^{\dagger}\ast_{N}
\mathcal{E}_{jj}\ast_{M}\mathcal{N}_{jj}^{\dagger}-
\mathcal{A}_{j+1,j+1}^{\dagger}\ast_{N}\mathcal{E}_{j+1,j+1}\nonumber\\
&\ast_{M}
\mathcal{B}_{j+1,j+1}^{\dagger}
+\mathcal{A}_{j+1,j+1}^{\dagger}\ast_{N}
\mathcal{C}_{j+1,j+1}
\ast_{N}\mathcal{M}_{j+1,j+1}^{\dagger}\ast_{N}\mathcal{E}_{j+1,j+1}\ast_{M}
\mathcal{B}_{j+1,j+1}^{\dagger}
+\mathcal{A}_{j+1,j+1}^{\dagger}\nonumber\\
&\ast_{N}\mathcal{S}_{j+1,j+1}\ast_{N}
\mathcal{C}_{j+1,j+1}^{\dagger}\ast_{N}\mathcal{E}_{j+1,j+1}\ast_{M}
\mathcal{N}_{j+1,j+1}^{\dagger}\ast_{M}\mathcal{D}_{j+1,j+1}\ast_{M}
\mathcal{B}_{j+1,j+1}^{\dagger},&\\
\widehat{\mathcal{A}_{j,j+1}}&=R_{\mathcal{A}_{j,j+1}}\ast_{N}
\mathcal{C}_{j,j+1},
~\widehat{\mathcal{B}_{j,j+1}}=\mathcal{D}_{j,j+1}\ast_{M}
\mathcal{L}_{\mathcal{B}_{j,j+1}},&\\
~\widehat{\mathcal{C}_{j,j+1}}&=\mathcal{R}_{\mathcal{A}_{j,j+1}}
\ast_{N}\mathcal{F}_{j,j+1},
~\widehat{\mathcal{D}_{j,j+1}}=\mathcal{G}_{j,j+1}\ast_{M}
\mathcal{L}_{\mathcal{B}_{j,j+1}},&\\
\widehat{\mathcal{E}_{j,j+1}}&=\mathcal{R}_{\mathcal{A}_{j,j+1}}
\ast_{N}\mathcal{E}_{j,j+1}\ast_{M}\mathcal{L}_{\mathcal{B}_{j,j+1}},
~\widehat{\mathcal{M}_{j,j+1}}=\mathcal{R}_{\widehat{\mathcal{A}_{j,j+1}}}
\ast_{N}\widehat{\mathcal{C}_{j,j+1}},&\\
~\widehat{\mathcal{N}_{j,j+1}}&=\widehat{\mathcal{D}_{j,j+1}}\ast_{M}
\mathcal{L}_{\widehat{\mathcal{B}_{j,j+1}}},
~\widehat{\mathcal{S}_{j,j+1}}=\widehat{\mathcal{C}_{j,j+1}}\ast_{N}
\mathcal{L}_{\widehat{\mathcal{M}_{j,j+1}}},~j=1,~2,&\\
\mathcal{A}&=\begin{bmatrix}
          \mathcal{L}_{\widehat{\mathcal{M}_{12}}}\ast_{N}
          \mathcal{L}_{\widehat{\mathcal{S}_{12}}} & -\mathcal{L}_{\widehat{\mathcal{A}_{23}}} \\
        \end{bmatrix},~\mathcal{B}=\begin{bmatrix}
                                \mathcal{R}_{\widehat{\mathcal{D}_{12}}} \\
                                -\mathcal{R}_{\widehat{\mathcal{B}_{23}}} \\
                              \end{bmatrix},&\\
~\mathcal{C}&=\mathcal{L}_{\widehat{\mathcal{M}_{12}}},
~\mathcal{D}=\mathcal{R}_{\widehat{\mathcal{N}_{12}}},&\\
\mathcal{F}&=\widehat{\mathcal{A}_{23}}^{\dagger}\ast_{N}
\widehat{\mathcal{S}_{23}},
~\mathcal{G}=\mathcal{R}_{\widehat{\mathcal{N}_{23}}}
\ast_{M}\widehat{\mathcal{D}_{23}}
\ast_{M}\widehat{\mathcal{B}_{23}}^{\dagger},&\\
\mathcal{E}&=\widehat{\mathcal{M}_{12}}^{\dagger}\ast_{N}
\widehat{\mathcal{E}_{12}}\ast_{M}\widehat{\mathcal{D}_{12}}^{\dagger}+
\widehat{\mathcal{S}_{12}}^{\dagger}\ast_{N}\widehat{\mathcal{S}_{12}}\ast_{N}
\widehat{\mathcal{C}_{12}}^{\dagger}\ast_{N}
\widehat{\mathcal{E}_{12}}\ast_{M}\widehat{\mathcal{N}_{12}}^{\dagger}-
\widehat{\mathcal{A}_{23}}^{\dagger}\ast_{N}\widehat{\mathcal{E}_{23}}
\nonumber\\
&\ast_{M}\widehat{\mathcal{B}_{23}}^{\dagger}+
\widehat{\mathcal{A}_{23}}^{\dagger}\ast_{N}\widehat{\mathcal{C}_{23}}\ast_{N}
\widehat{\mathcal{M}_{23}}^{\dagger}\ast_{N}
\widehat{\mathcal{E}_{23}}\ast_{M}\widehat{\mathcal{B}_{23}}^{\dagger}
+\widehat{\mathcal{A}_{23}}^{\dagger}\ast_{N}\widehat{\mathcal{S}_{23}}\ast_{N}
\widehat{\mathcal{C}_{23}}^{\dagger}\ast_{N}
\widehat{\mathcal{E}_{23}}\nonumber\\
&\ast_{M}\widehat{\mathcal{N}_{23}}^{\dagger}\ast_{M}
\widehat{\mathcal{D}_{23}}\ast_{M}\widehat{\mathcal{B}_{23}}^{\dagger},&\\
\widehat{\mathcal{A}}&=\mathcal{R}_{\mathcal{A}}\ast_{N}\mathcal{C},
~\widehat{\mathcal{B}}=\mathcal{D}\ast_{M}\mathcal{L}_{\mathcal{B}},
~\widehat{\mathcal{C}}=\mathcal{R}_{\mathcal{A}}\ast_{N}\mathcal{F},
~\widehat{\mathcal{D}}=\mathcal{G}_{12}\ast_{M}\mathcal{L}_{\mathcal{B}},&\\
\widehat{\mathcal{E}}&=\mathcal{R}_{\mathcal{A}}\ast_{N}
\mathcal{E}\ast_{M}\mathcal{L}_{\mathcal{B}},
~\widehat{\mathcal{M}}=\mathcal{R}_{\widehat{\mathcal{A}}}
\ast_{N}\widehat{\mathcal{C}},
~\widehat{\mathcal{N}}=\widehat{\mathcal{D}}\ast_{M}
\mathcal{L}_{\widehat{\mathcal{B}}},
~\widehat{\mathcal{S}}=\widehat{\mathcal{C}}\ast_{N}
\mathcal{L}_{\widehat{\mathcal{M}}},
\end{align}
According the proof of Theorem \ref{thm1}, we get the following theorem:
\begin{theorem}\label{thm2}
Consider system (\ref{july17equ022}).
Then the following statements are equivalent:
\begin{enumerate}
	\item\label{item03} System (\ref{july17equ022}) is consistent.
	\item\label{item04} The following equalities are satisfied:
	\begin{align}
	&\label{july17equ052}\mathcal{R}_{\mathcal{M}_{i}}\ast_{N}
	\mathcal{R}_{\mathcal{A}_{i}}\ast_{N}\mathcal{E}_{i}=0,
	~\mathcal{E}_{i}\ast_{M}\mathcal{L}_{\mathcal{B}_{i}}
	\ast_{M}\mathcal{L}_{\mathcal{N}_{i}}=0,\\
	&\label{july17equ053}\mathcal{R}_{\mathcal{A}_{i}}\ast_{N}\mathcal{E}_{i}
	\ast_{M}\mathcal{L}_{\mathcal{D}_{i}}=0,
	~\mathcal{R}_{\mathcal{C}_{i}}\ast_{N}
	\mathcal{E}_{i}\ast_{M}\mathcal{L}_{\mathcal{B}_{i}}=0,&\\
	&\mathcal{R}_{\mathcal{M}_{ii}}\ast_{N}
	\mathcal{R}_{\widehat{\mathcal{A}_{i}}}\ast_{N}
	\widehat{\mathcal{E}_{i}}=0,
	~\widehat{\mathcal{E}_{i}}\ast_{M}
	\mathcal{L}_{\widehat{\mathcal{B}_{i}}}
	\ast_{M}\mathcal{L}_{\mathcal{N}_{ii}}=0,\\
	&\mathcal{R}_{\widehat{\mathcal{A}_{i}}}\ast_{N}\widehat{\mathcal{E}_{i}}
	\ast_{M}\mathcal{L}_{\widehat{\mathcal{D}_{i}}}=0,
	~\mathcal{R}_{\widehat{\mathcal{C}_{i}}}\ast_{N}
	\widehat{\mathcal{E}_{i}}\ast_{M}
	\mathcal{L}_{\widehat{\mathcal{B}_{i}}}=0, ~(i=1,2,3),\\
	&\mathcal{R}_{\widehat{\mathcal{M}_{j,j+1}}}\ast_{N}
	\mathcal{R}_{\widehat{\mathcal{A}_{j,j+1}}}\ast_{N}
	\widehat{\mathcal{E}_{j,j+1}}=0,
	~\widehat{\mathcal{E}_{j,j+1}}\ast_{M}
	\mathcal{L}_{\widehat{\mathcal{B}_{j,j+1}}}
	\ast_{M}\mathcal{L}_{\widehat{\mathcal{Q}_{j,j+1}}}=0,\\
	&\mathcal{R}_{\widehat{\mathcal{A}_{j,j+1}}}\ast_{N}
	\widehat{\mathcal{E}_{j,j+1}}
	\ast_{M}\mathcal{L}_{\widehat{\mathcal{B}_{j,j+1}}}=0,
	~\mathcal{R}_{\widehat{\mathcal{C}_{j,j+1}}}\ast_{N}
	\widehat{\mathcal{E}_{j,j+1}}\ast_{M}\mathcal{L}_{\widehat{\mathcal{B}_{j,j+1}}}=0, ~(j=1,2),
\\
	&\mathcal{R}_{\widehat{\mathcal{M}}}\ast_{N}
	\mathcal{R}_{\widehat{\mathcal{A}}}\ast_{N}\widehat{\mathcal{E}}=0,
	~\widehat{\mathcal{E}}\ast_{M}\mathcal{L}_{\widehat{\mathcal{B}}}
	\ast_{M}\mathcal{L}_{\widehat{\mathcal{Q}}}=0,\\
	&\mathcal{R}_{\widehat{\mathcal{A}}}\ast_{N}
	\widehat{\mathcal{E}}
	\ast_{M}\mathcal{L}_{\widehat{\mathcal{B}}}=0,
	~\mathcal{R}_{\widehat{\mathcal{C}}}\ast_{N}
	\widehat{\mathcal{E}}\ast_{M}\mathcal{L}_{\widehat{\mathcal{B}}}=0,
	\end{align}
\end{enumerate}
Furthermore, if (\ref{item03}) holds, then the general solution to System (\ref{mainsystem02}) can be expressed as
\begin{align}
\label{july17equ054}
\mathcal{Z}_{k}&=\mathcal{A}_{k}^{\dagger}\ast_{N}\mathcal{E}_{k}\ast_{M}
\mathcal{B}_{k}^{\dagger}-\mathcal{A}_{k}^{\dagger}\ast_{N}
\mathcal{C}_{k}\ast_{N}
\mathcal{M}_{k}^{\dagger}\ast_{N}\mathcal{E}_{k}\ast_{M}
\mathcal{B}_{k}^{\dagger}
-\mathcal{A}_{k}^{\dagger}\ast_{N}\mathcal{S}_{k}\ast_{N}
\mathcal{C}_{k}^{\dagger}
\ast_{N}\mathcal{E}_{k}\ast_{M}\mathcal{N}_{k}^{\dagger}
\ast_{M}\nonumber\\
&\mathcal{D}_{k}\ast_{M}\mathcal{B}_{k}^{\dagger}-\mathcal{A}_{k}^{\dagger}
\ast_{N}\mathcal{S}_{k}\ast_{N}
\mathcal{W}_{k2}\ast_{M}\mathcal{R}_{\mathcal{N}_{k}}\ast_{M}
\mathcal{D}_{k}
\ast_{M}\mathcal{B}_{k}^{\dagger}+
\mathcal{L}_{\mathcal{A}_{k}}\ast_{N}\mathcal{W}_{k4}
+\mathcal{W}_{k5}\ast_{M}\mathcal{R}_{\mathcal{B}_{k}},&\\
\label{july17equ055}
\mathcal{Z}_{k+1}&=\mathcal{M}_{k}^{\dagger}\ast_{N}
\mathcal{E}_{k}\ast_{M}
\mathcal{D}_{k}^{\dagger}
+\mathcal{S}_{k}^{\dagger}\ast_{N}\mathcal{S}_{k}\ast_{N}
\mathcal{C}_{k}^{\dagger}
\ast_{N}\mathcal{E}_{k}\ast_{M}\mathcal{N}_{k}^{\dagger}
+\mathcal{L}_{\mathcal{M}_{k}}\ast_{N}\mathcal{L}_{\mathcal{S}_{k}}
\ast_{N}\mathcal{W}_{k1}
+\mathcal{L}_{\mathcal{M}_{k}}\ast_{N}\mathcal{W}_{k2}\nonumber\\
&\ast_{M}\mathcal{R}_{\mathcal{N}_{k}}+
\mathcal{W}_{k3}\ast_{M}\mathcal{R}_{\mathcal{D}_{k}},~(k=1,2,3,4)
\end{align}
where
\begin{align}
\mathcal{W}_{i1}&=\begin{bmatrix}
                    \mathcal{I}_{p_i}  &
                    0 \\
                  \end{bmatrix}\ast_{N}[
\widehat{\mathcal{A}_{i}}^{\dag}\ast_{N}
(\widehat{\mathcal{E}_{i}}-\widehat{\mathcal{C}_{i}}
\ast_{N}\mathcal{W}_{i2}\ast_{M}
\widehat{\mathcal{D}_{i}}-\widehat{\mathcal{F}_{i}}\ast_{N}
\mathcal{W}_{i+1,2}\ast_{M}
\widehat{\mathcal{G}_{i}})-\mathcal{T}_{i1}\ast_{M}
\widehat{\mathcal{B}_{i}}
+\mathcal{L}_{\widehat{\mathcal{A}_{i}}}\ast_{N}\mathcal{T}_{i2}],&\\
\mathcal{W}_{i+1,4}&=\begin{bmatrix}
                    0 &
                    \mathcal{I}_{p_i} \\
                  \end{bmatrix}\ast_{N}[
\widehat{\mathcal{A}_{i}}^{\dag}\ast_{N}
(\widehat{\mathcal{E}_{i}}-\widehat{\mathcal{C}_{i}}
\ast_{N}\mathcal{W}_{i2}\ast_{M}
\widehat{\mathcal{D}_{i}}-\widehat{\mathcal{F}_{i}}\ast_{N}
\mathcal{W}_{i+1,2}\ast_{M}
\widehat{\mathcal{G}_{i}})-\mathcal{T}_{i1}\ast_{M}
\widehat{\mathcal{B}_{i}}
+\mathcal{L}_{\widehat{\mathcal{A}_{i}}}\ast_{N}\mathcal{T}_{i2}],&\\
\mathcal{W}_{i3}&=[\mathcal{R}_{\widehat{\mathcal{A}_{i}}}\ast_{N}
(\widehat{\mathcal{E}_{i}}-\widehat{\mathcal{C}_{i}}\ast_{N}
\mathcal{W}_{i2}
\ast_{M}\widehat{\mathcal{D}_{i}}-
\widehat{\mathcal{F}_{i}}\ast_{N}\mathcal{W}_{i+1,2}\ast_{M}
\widehat{\mathcal{G}_{i}})
\ast_{M}\widehat{\mathcal{B}_{i}}^{\dag}
+\widehat{\mathcal{A}_{i}}\ast_{N}\mathcal{T}_{i1}&\\
&+\mathcal{T}_{i3}
\ast_{M}\mathcal{R}_{\widehat{\mathcal{B}_{i}}}]\ast_{M}
\begin{bmatrix}
  \mathcal{I}_{q_i} \\
   0 \\
\end{bmatrix},&\\
\mathcal{W}_{i+1,5}&=[\mathcal{R}_{\widehat{\mathcal{A}_{i}}}\ast_{N}
(\widehat{\mathcal{E}_{i}}-\widehat{\mathcal{C}_{i}}\ast_{N}
\mathcal{W}_{i2}
\ast_{M}\widehat{\mathcal{D}_{i}}-
\widehat{\mathcal{F}_{i}}\ast_{N}\mathcal{W}_{i+1,2}\ast_{M}
\widehat{\mathcal{G}_{i}})
\ast_{M}\widehat{\mathcal{B}_{i}}^{\dag}
+\widehat{\mathcal{A}_{i}}\ast_{N}\mathcal{T}_{i1}&\\
&+\mathcal{T}_{i3}
\ast_{M}\mathcal{R}_{\widehat{\mathcal{B}_{i}}}]\ast_{M}
\begin{bmatrix}
    0  \\
   \mathcal{I}_{q_i} \\
\end{bmatrix},&
\\
\label{july17equ056}
\mathcal{W}_{i2}&=\mathcal{A}_{ii}^{\dagger}\ast_{N}\mathcal{E}_{ii}\ast_{M}
\mathcal{B}_{ii}^{\dagger}-\mathcal{A}_{ii}^{\dagger}\ast_{N}\mathcal{C}_{ii}\ast_{N}
\mathcal{M}_{ii}^{\dagger}\ast_{N}\mathcal{E}_{ii}\ast_{M}\mathcal{B}_{ii}^{\dagger}
-\mathcal{A}_{ii}^{\dagger}\ast_{N}\mathcal{S}_{ii}\ast_{N}\mathcal{C}_{ii}^{\dagger}
\ast_{N}\mathcal{E}_{ii}\ast_{M}\mathcal{N}_{ii}^{\dagger}\ast_{M}\nonumber\\
&\mathcal{D}_{ii}\ast_{M}\mathcal{B}_{ii}^{\dagger}-\mathcal{A}_{ii}^{\dagger}
\ast_{N}\mathcal{S}_{ii}\ast_{N}
\mathcal{T}_{i4}\ast_{M}\mathcal{R}_{\mathcal{N}_{ii}}\ast_{M}\mathcal{D}_{ii}
\ast_{M}\mathcal{B}_{ii}^{\dagger}+
\mathcal{L}_{\mathcal{A}_{ii}}\ast_{N}\mathcal{T}_{i5}
+\mathcal{T}_{i6}\ast_{M}\mathcal{R}_{\mathcal{B}_{ii}},&\\
\label{july17equ057}
\mathcal{W}_{i+1,2}&=\mathcal{M}_{ii}^{\dagger}\ast_{N}
\mathcal{E}_{ii}\ast_{M}
\mathcal{D}_{ii}^{\dagger}
+\mathcal{S}_{ii}^{\dagger}\ast_{N}\mathcal{S}_{ii}\ast_{N}
\mathcal{C}_{ii}^{\dagger}
\ast_{N}\mathcal{E}_{ii}\ast_{M}\mathcal{N}_{ii}^{\dagger}
+\mathcal{L}_{\mathcal{M}_{ii}}\ast_{N}\mathcal{L}_{\mathcal{S}_{ii}}
\ast_{N}\mathcal{T}_{i7}
+\mathcal{L}_{\mathcal{M}_{ii}}\ast_{N}\mathcal{T}_{i4}\nonumber\\
&\ast_{M}\mathcal{R}_{\mathcal{N}_{ii}}+
\mathcal{T}_{i8}\ast_{M}\mathcal{R}_{\mathcal{D}_{ii}},~(i=1,2,3)&\\
\mathcal{T}_{j7}&=\begin{bmatrix}
                    \mathcal{I}_{m_j} & 0 \\
                  \end{bmatrix}\ast_{N}
[\mathcal{A}_{j,j+1}^{\dag}\ast_{N}(\mathcal{E}_{j,j+1}-\mathcal{C}_{j,j+1}
\ast_{N}\mathcal{T}_{j4}\ast_{M}
\mathcal{D}_{j,j+1}-\mathcal{F}_{j,j+1}\ast_{N}\mathcal{T}_{j+1,4}\ast_{M}
\mathcal{G}_{j,j+1})\nonumber\\
&-\mathcal{U}_{j1}\ast_{M}\mathcal{B}_{j,j+1}
+\mathcal{L}_{\mathcal{A}_{j,j+1}}\ast_{N}\mathcal{U}_{j2}],&\\
\mathcal{T}_{j+1,5}&=\begin{bmatrix}
                    0 & \mathcal{I}_{m_j} \\
                  \end{bmatrix}\ast_{N}
[\mathcal{A}_{j,j+1}^{\dag}\ast_{N}(\mathcal{E}_{j,j+1}-\mathcal{C}_{j,j+1}
\ast_{N}\mathcal{T}_{j4}\ast_{M}
\mathcal{D}_{j,j+1}-\mathcal{F}_{j,j+1}\ast_{N}\mathcal{T}_{j+1,4}\ast_{M}
\mathcal{G}_{j,j+1})\nonumber\\
&-\mathcal{U}_{j1}\ast_{M}\mathcal{B}_{j,j+1}
+\mathcal{L}_{\mathcal{A}_{j,j+1}}\ast_{N}\mathcal{U}_{j2}],&\\
\mathcal{T}_{j8}&=[\mathcal{R}_{\mathcal{A}_{j,j+1}}\ast_{N}
(\mathcal{E}_{j,j+1}-\mathcal{C}_{j,j+1}\ast_{N}\mathcal{U}_{12}
\ast_{M}\mathcal{D}_{j,j+1}-
\mathcal{F}_{j,j+1}\ast_{N}\mathcal{U}_{22}\ast_{M}\mathcal{G}_{j,j+1})
\ast_{M}\mathcal{B}_{j,j+1}^{\dag}\nonumber\\
&+\mathcal{A}_{j,j+1}\ast_{N}\mathcal{U}_{j1}+
\mathcal{U}_{j3}
\ast_{M}\mathcal{R}_{\mathcal{B}_{j,j+1}}]\ast_{M}
\begin{bmatrix}
\mathcal{I}_{n_j} \\
0 \\
\end{bmatrix},&
		\end{align}
\begin{align}
\mathcal{T}_{j+1,6}&=[\mathcal{R}_{\mathcal{A}_{j,j+1}}\ast_{N}
(\mathcal{E}_{j,j+1}-\mathcal{C}_{j,j+1}\ast_{N}\mathcal{U}_{12}
\ast_{M}\mathcal{D}_{j,j+1}-
\mathcal{F}_{j,j+1}\ast_{N}\mathcal{U}_{22}\ast_{M}\mathcal{G}_{j,j+1})
\ast_{M}\mathcal{B}_{j,j+1}^{\dag}\nonumber\\
&+\mathcal{A}_{j,j+1}\ast_{N}\mathcal{U}_{j1}+
\mathcal{U}_{j3}
\ast_{M}\mathcal{R}_{\mathcal{B}_{j,j+1}}]\ast_{M}
\begin{bmatrix}
0\\
\mathcal{I}_{n_j}  \\
\end{bmatrix},&\\
\mathcal{T}_{j4}&=\widehat{\mathcal{A}_{j,j+1}}^{\dagger}\ast_{N}
\widehat{\mathcal{E}_{j,j+1}}\ast_{M}\widehat{\mathcal{B}_{j,j+1}}^{\dagger}
-\widehat{\mathcal{A}_{j,j+1}}^{\dagger}\ast_{N}\widehat{\mathcal{C}_{j,j+1}}\ast_{N}
\widehat{\mathcal{M}_{j,j+1}}^{\dagger}\ast_{N}\widehat{\mathcal{E}_{j,j+1}}
\ast_{M}\mathcal{B}_{j,j+1}^{\dagger}-\nonumber\\
&\widehat{\mathcal{A}_{j,j+1}}^{\dagger}\ast_{N}\widehat{\mathcal{S}_{j,j+1}}
\ast_{N}\widehat{\mathcal{C}_{j,j+1}}^{\dagger}
\ast_{N}\widehat{\mathcal{E}_{j,j+1}}\ast_{M}
\widehat{\mathcal{N}_{j,j+1}}^{\dagger}\ast_{M}
\widehat{\mathcal{D}_{j,j+1}}\ast_{M}
\widehat{\mathcal{B}_{j,j+1}}^{\dagger}
-\widehat{\mathcal{A}_{j,j+1}}^{\dagger}\ast_{N}\nonumber\\
&\widehat{\mathcal{S}_{j,j+1}}\ast_{N}
\widehat{\mathcal{U}_{j4}}\ast_{M}\mathcal{R}_{\widehat{\mathcal{N}_{j,j+1}}}\ast_{M}
\widehat{\mathcal{D}_{j,j+1}}
\ast_{M}\widehat{\mathcal{B}_{j,j+1}}^{\dagger}
+\mathcal{L}_{\widehat{\mathcal{A}_{j,j+1}}}\ast_{N}\mathcal{U}_{j5}
+\mathcal{U}_{j6}\ast_{M}\mathcal{R}_{\widehat{\mathcal{B}_{j,j+1}}},&
\\
\mathcal{T}_{j+1,4}&=\widehat{\mathcal{M}_{j,j+1}}^{\dag}\ast_{N}
\widehat{\mathcal{E}_{j,j+1}}\ast_{N}\widehat{\mathcal{D}_{j,j+1}}^{\dag}
+\widehat{\mathcal{S}_{j,j+1}}^{\dag}\ast_{N}\widehat{\mathcal{S}_{j,j+1}}
\ast_{N}\widehat{\mathcal{C}_{j,j+1}}\ast_{N}
\widehat{\mathcal{E}_{j,j+1}}\ast_{M}
\widehat{\mathcal{N}_{j,j+1}}^{\dag}+\nonumber\\
&\mathcal{L}_{\widehat{\mathcal{M}}_{j,j+1}}\ast_{N}
\mathcal{L}_{\widehat{\mathcal{S}}_{j,j+1}}
\ast_{N}\mathcal{U}_{j7}
+\mathcal{L}_{\widehat{\mathcal{M}_{j,j+1}}}\ast_{N}
\mathcal{U}_{j4}\ast_{M}\mathcal{R}_{\widehat{\mathcal{N}_{j,j+1}}}+
\mathcal{U}_{j8}\ast_{M}\mathcal{R}_{\widehat{\mathcal{D}_{j,j+1}}},
~(j=1,2)&\\
\mathcal{U}_{17}&=\begin{bmatrix}
                    \mathcal{I}_s & 0 \\
                  \end{bmatrix}\ast_{N}
[\mathcal{A}^{\dag}\ast_{N}(\mathcal{E}-\mathcal{C}
\ast_{N}\mathcal{U}_{14}\ast_{M}
\mathcal{D}-\mathcal{F}\ast_{N}\mathcal{U}_{24}\ast_{M}
\mathcal{G})
-\mathcal{V}_{1}\ast_{M}\mathcal{B}
+\mathcal{L}_{\mathcal{A}}\ast_{N}\mathcal{V}_{2}],&\\
\mathcal{U}_{25}&=\begin{bmatrix}
                    0 & \mathcal{I}_s \\
                  \end{bmatrix}\ast_{N}
[\mathcal{A}^{\dag}\ast_{N}(\mathcal{E}-\mathcal{C}
\ast_{N}\mathcal{U}_{14}\ast_{M}
\mathcal{D}-\mathcal{F}\ast_{N}\mathcal{U}_{24}\ast_{M}
\mathcal{G})
-\mathcal{V}_{1}\ast_{M}\mathcal{B}
+\mathcal{L}_{\mathcal{A}}\ast_{N}\mathcal{V}_{3}],&\\
\mathcal{U}_{18}&=[\mathcal{R}_{\mathcal{A}}\ast_{N}
(\mathcal{E}-\mathcal{C}\ast_{N}\mathcal{U}_{14}
\ast_{M}\mathcal{D}-
\mathcal{F}\ast_{N}\mathcal{U}_{24}\ast_{M}\mathcal{G})
\ast_{M}\mathcal{B}^{\dag}
+\mathcal{A}\ast_{N}\mathcal{U}_{j1}+
\mathcal{U}_{j3}
\ast_{M}\mathcal{R}_{\mathcal{B}}]\nonumber\\
&\ast_{M}
\begin{bmatrix}
\mathcal{I}_t \\
0 \\
\end{bmatrix},&\\
\mathcal{U}_{26}&=[\mathcal{R}_{\mathcal{A}}\ast_{N}
(\mathcal{E}-\mathcal{C}\ast_{N}\mathcal{U}_{14}
\ast_{M}\mathcal{D}-
\mathcal{F}\ast_{N}\mathcal{U}_{24}\ast_{M}\mathcal{G})
\ast_{M}\mathcal{B}^{\dag}
+\mathcal{A}\ast_{N}\mathcal{U}_{j1}+
\mathcal{U}_{j3}
\ast_{M}\mathcal{R}_{\mathcal{B}}]\nonumber\\
&\ast_{M}
\begin{bmatrix}
0\\
\mathcal{I}_t  \\
\end{bmatrix},&\\
\mathcal{U}_{14}&=\widehat{\mathcal{A}}^{\dagger}\ast_{N}
\widehat{\mathcal{E}}\ast_{M}\widehat{\mathcal{B}}^{\dagger}
-\widehat{\mathcal{A}}^{\dagger}\ast_{N}\widehat{\mathcal{C}}\ast_{N}
\widehat{\mathcal{M}}^{\dagger}\ast_{N}\widehat{\mathcal{E}}
\ast_{M}\mathcal{B}^{\dagger}-
\widehat{\mathcal{A}}^{\dagger}\ast_{N}\widehat{\mathcal{S}}
\ast_{N}\widehat{\mathcal{C}}^{\dagger}
\ast_{N}\widehat{\mathcal{E}}\ast_{M}
\widehat{\mathcal{N}}^{\dagger}\ast_{M}
\widehat{\mathcal{D}}\ast_{M}
\widehat{\mathcal{B}}^{\dagger}\nonumber\\
&-\widehat{\mathcal{A}}^{\dagger}\ast_{N}
\widehat{\mathcal{S}}\ast_{N}
\mathcal{V}_{4}\ast_{M}
\mathcal{R}_{\widehat{\mathcal{N}}}\ast_{M}
\widehat{\mathcal{D}}
\ast_{M}\widehat{\mathcal{B}}^{\dagger}
+\mathcal{L}_{\widehat{\mathcal{A}}}\ast_{N}\mathcal{V}_{5}
+\mathcal{V}_{6}\ast_{M}\mathcal{R}_{\widehat{\mathcal{B}}},&\\
\mathcal{U}_{24}&=\widehat{\mathcal{M}}^{\dag}\ast_{N}
\widehat{\mathcal{E}}\ast_{N}\widehat{\mathcal{D}}^{\dag}
+\widehat{\mathcal{S}}^{\dag}\ast_{N}\widehat{\mathcal{S}}
\ast_{N}\widehat{\mathcal{C}}\ast_{N}
\widehat{\mathcal{E}}\ast_{M}
\widehat{\mathcal{N}}^{\dag}+
\mathcal{L}_{\widehat{\mathcal{M}}}\ast_{N}
\mathcal{L}_{\widehat{\mathcal{S}}}
\ast_{N}\mathcal{V}_{7}
+\mathcal{L}_{\widehat{\mathcal{M}}}\ast_{N}
\mathcal{V}_{4}\ast_{M}
\mathcal{R}_{\widehat{\mathcal{N}}}\nonumber\\
&+\mathcal{V}_{8}\ast_{M}\mathcal{R}_{\widehat{\mathcal{D}}},
\end{align}
and $\mathcal{T}_{i1},~\mathcal{T}_{i2},~\mathcal{T}_{i3},~
~\mathcal{U}_{j1},~\mathcal{U}_{j2},~\mathcal{U}_{j3},~
~\mathcal{T}_{14},~\mathcal{T}_{15},~
\mathcal{T}_{16},~\mathcal{T}_{37},~\mathcal{T}_{38},~ \mathcal{U}_{15},~\mathcal{U}_{16},~~\mathcal{U}_{27},$
$~\mathcal{U}_{28},
~\mathcal{V}_{t},~t=1,\ldots,8$ are arbitrary tensors with appropriate sizes over $\mathbb{H}$. $p_i,m_j,s$ is the same as the row block of $\mathcal{C}_i,\mathcal{C}_{j+1},\mathcal{C}_3$, respectively. $q_i,n_j,t$ is the same as the column block of $\mathcal{D}_i,\mathcal{D}_{j+1},\mathcal{D}_2$, $(i=1,2,3;~j=1,2)$.
\end{theorem}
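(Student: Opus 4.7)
The plan is to mimic, level by level, the proof of Theorem~\ref{thm1}, using Lemma~\ref{lemma1} at the bottom level and then Lemma~\ref{lemma4} repeatedly for the coupling equations that arise from the shared unknowns. Since System (\ref{mainsystem02}) consists of four equations, each of the two-sided Sylvester form
\[
\mathcal{A}_k\ast_{N}\mathcal{Z}_k\ast_{M}\mathcal{B}_k
+\mathcal{C}_k\ast_{N}\mathcal{Z}_{k+1}\ast_{M}\mathcal{D}_k=\mathcal{E}_k,\qquad k=1,2,3,4,
\]
I first treat each equation in isolation. By Lemma~\ref{lemma1} (with $\mathcal{X},\mathcal{Y}$ playing the roles of $\mathcal{Z}_k,\mathcal{Z}_{k+1}$), each of these is consistent if and only if the conditions (\ref{july17equ052})--(\ref{july17equ053}) hold; in that case the general solution pair can be written in the form (\ref{july17equ054})--(\ref{july17equ055}) with free parameters $\mathcal{W}_{k1},\ldots,\mathcal{W}_{k5}$.

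The second step is to enforce the couplings. The unknowns $\mathcal{Z}_2,\mathcal{Z}_3,\mathcal{Z}_4$ each appear in two consecutive equations, which gives three equalities of the type ``$\mathcal{Z}_{i+1}$ from equation $i$ $=$ $\mathcal{Z}_{i}$ from equation $i+1$''. After moving the free-parameter pieces to one side and absorbing constants, each such equality rearranges into a Sylvester-type tensor equation of exactly the form (\ref{july23equ01}) of Lemma~\ref{lemma4}, namely
\[
\widehat{\mathcal{A}_i}\ast_N\!\!\begin{bmatrix}\mathcal{W}_{i1}\\ \mathcal{W}_{i+1,4}\end{bmatrix}
+\begin{bmatrix}\mathcal{W}_{i3} & \mathcal{W}_{i+1,5}\end{bmatrix}\!\ast_M\widehat{\mathcal{B}_i}
+\widehat{\mathcal{C}_i}\ast_N\mathcal{W}_{i2}\ast_M\widehat{\mathcal{D}_i}
+\widehat{\mathcal{F}_i}\ast_N\mathcal{W}_{i+1,2}\ast_M\widehat{\mathcal{G}_i}=\widehat{\mathcal{E}_i},
\]
for $i=1,2,3$, where the caret-tensors are precisely the ones defined at the beginning of Section~4. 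Applying Lemma~\ref{lemma4} to each of these three systems yields the second layer of compatibility conditions and solution formulas for $\mathcal{W}_{i1},\mathcal{W}_{i+1,4},\mathcal{W}_{i3},\mathcal{W}_{i+1,5},\mathcal{W}_{i2},\mathcal{W}_{i+1,2}$ in terms of new free tensors $\mathcal{T}_{i1},\ldots,\mathcal{T}_{i8}$, matching (\ref{july17equ056})--(\ref{july17equ057}) and the block formulas that follow.

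The third step iterates the construction: the overlap now occurs between consecutive $\mathcal{W}_{i+1,2}$'s, so setting the expressions for $\mathcal{W}_{22}$ (resp.\ $\mathcal{W}_{32}$) coming from the $i=1$ and $i=2$ (resp.\ $i=2$ and $i=3$) systems equal produces two further Sylvester-type coupling equations in the $\mathcal{T}_{i\cdot}$ parameters; these again have the shape required by Lemma~\ref{lemma4}, with data $\mathcal{A}_{j,j+1},\mathcal{B}_{j,j+1},\ldots,\mathcal{E}_{j,j+1}$ for $j=1,2$. A fourth and final application of Lemma~\ref{lemma4} to the resulting single equation in $\mathcal{U}_{14},\mathcal{U}_{24}$ (obtained by equating the two expressions for $\mathcal{T}_{24}$) produces the last consistency conditions on $\widehat{\mathcal{A}},\widehat{\mathcal{B}},\widehat{\mathcal{C}},\widehat{\mathcal{D}},\widehat{\mathcal{E}}$. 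Accumulating the conditions from all four layers gives exactly statement (\ref{item04}), and chaining the parameterized expressions backwards reconstructs (\ref{july17equ054})--(\ref{july17equ055}) with the auxiliary formulas prescribed in the theorem; conversely, any choice of the free tensors yields a genuine solution because each application of Lemma~\ref{lemma1} or Lemma~\ref{lemma4} is an ``if and only if''.

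The main obstacle I anticipate is bookkeeping rather than a conceptual difficulty: one must verify at each coupling step that the rearrangement really does land in the form of (\ref{july23equ01}), which requires carefully identifying the coefficient tensors with the caret/hat definitions at the start of Section~4 and confirming that the index sizes of the block matrices $[\mathcal{I}\ 0]$, $[0\ \mathcal{I}]$, $\bigl[\begin{smallmatrix}\mathcal{I}\\0\end{smallmatrix}\bigr]$, $\bigl[\begin{smallmatrix}0\\\mathcal{I}\end{smallmatrix}\bigr]$ match the dimensions of $\mathcal{C}_k,\mathcal{D}_k$ as declared at the end of the theorem. Once those identifications are made, the four-layer induction goes through exactly as in the proof of Theorem~\ref{thm1}.
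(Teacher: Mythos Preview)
Your proposal is correct and follows the same layered strategy as the paper. The only difference is one of packaging: after the first step (applying Lemma~\ref{lemma1} to each of the four equations and then equating the three shared unknowns $\mathcal{Z}_2,\mathcal{Z}_3,\mathcal{Z}_4$), the paper observes that the resulting three coupling equations in the $\mathcal{W}$-parameters are \emph{literally} an instance of System~(\ref{mainsystem01}) with data $\widehat{\mathcal{A}_i},\widehat{\mathcal{B}_i},\widehat{\mathcal{C}_i},\widehat{\mathcal{D}_i},\widehat{\mathcal{F}_i},\widehat{\mathcal{G}_i},\widehat{\mathcal{E}_i}$, and simply invokes Theorem~\ref{thm1} once to finish. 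You instead unroll that invocation by re-applying Lemma~\ref{lemma4} at each subsequent layer, which amounts to reproving Theorem~\ref{thm1} inline. Both routes yield the same conditions and formulas; the paper's is shorter because the hard work was already done in Section~3.
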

\begin{proof}
(\ref{item03})$\Leftrightarrow$(\ref{item04}): We divide the proof into two parts:
\begin{itemize}
  \item Firstly, we separate the system (\ref{mainsystem02}) into four equations as following
\begin{align}
&\label{july17equ058}\mathcal{A}_1\ast_{M}\mathcal{Z}_1\ast_{N}\mathcal{B}_1+
\mathcal{C}_1\ast_{M}\mathcal{Z}_{2}\ast_{N}\mathcal{D}_1
=\mathcal{E}_1,\\
&\label{july17equ059}\mathcal{A}_2\ast_{M}\mathcal{Z}_2\ast_{N}\mathcal{B}_2+
\mathcal{C}_2\ast_{M}
\mathcal{Z}_{3}\ast_{N}\mathcal{D}_2=\mathcal{E}_2,
\end{align}
\begin{align}
&\label{july17equ060}\mathcal{A}_3\ast_{M}\mathcal{Z}_3\ast_{N}\mathcal{B}_3+
\mathcal{C}_3\ast_{M}
\mathcal{Z}_{4}\ast_{N}\mathcal{D}_3=\mathcal{E}_3,\\
&\label{july17equ061}\mathcal{A}_4\ast_{M}\mathcal{Z}_4\ast_{N}\mathcal{B}_4+
\mathcal{C}_4\ast_{M}
\mathcal{Z}_{5}\ast_{N}\mathcal{D}_4=\mathcal{E}_4.
\end{align}
Using four of Lemma \ref{lemma1},
it can be easily proved that the  solvable conditions (\ref{july17equ052}), (\ref{july17equ053}) hold when the system is consistent, and the expression of general solution are given by (\ref{july17equ054}) and (\ref{july17equ055}), (k=1,2,3,4).
  \item Let $\mathcal{Z}_2$ in Equation (\ref{july17equ058}) be equal to $\mathcal{Z}_2$ in Equation (\ref{july17equ059}), $\mathcal{Z}_3$ in Equation (\ref{july17equ059}) be equal to $\mathcal{Z}_3$ in Equation (\ref{july17equ060}) and $\mathcal{Z}_4$ in Equation (\ref{july17equ060})) be equal to $\mathcal{Z}_4$ in Equation (\ref{july17equ061}). Then we obtain a new tensor system as following
\begin{equation}
\begin{cases}
\widehat{\mathcal{A}_1}\ast_{N}\begin{bmatrix}
                                 W_{11} \\
                                 W_{24} \\
                               \end{bmatrix}
+
\begin{bmatrix}
  W_{13} &  W_{25} \\
\end{bmatrix}
\ast_{M}\widehat{\mathcal{B}_1}+
\widehat{\mathcal{C}_1}\ast_{N}
\mathcal{W}_{12}\ast_{M}\widehat{\mathcal{D}_1}+
\widehat{\mathcal{F}_1}\ast_{N}
\mathcal{W}_{22}\ast_{M}\widehat{\mathcal{G}_1}=
\widehat{\mathcal{E}_1}&\\
\widehat{\mathcal{A}_2}\ast_{N}\begin{bmatrix}
                                 W_{21} \\
                                 W_{34} \\
                               \end{bmatrix}
+
\begin{bmatrix}
  W_{23} &  W_{35} \\
\end{bmatrix}
\ast_{M}\widehat{\mathcal{B}_2}+
\widehat{\mathcal{C}_2}\ast_{N}
\mathcal{W}_{22}\ast_{M}\widehat{\mathcal{D}_2}+
\widehat{\mathcal{F}_2}\ast_{N}
\mathcal{W}_{32}\ast_{M}\widehat{\mathcal{G}_2}=
\widehat{\mathcal{E}_2}&\\
\widehat{\mathcal{A}_3}\ast_{N}\begin{bmatrix}
                                 W_{31} \\
                                 W_{44} \\
                               \end{bmatrix}
+
\begin{bmatrix}
  W_{33} &  W_{45} \\
\end{bmatrix}
\ast_{M}\widehat{\mathcal{B}_3}+
\widehat{\mathcal{C}_3}\ast_{N}
\mathcal{W}_{32}\ast_{M}\widehat{\mathcal{D}_3}+
\widehat{\mathcal{F}_3}\ast_{N}
\mathcal{W}_{42}\ast_{M}\widehat{\mathcal{G}_3}=
\widehat{\mathcal{E}_3}.&\\
\end{cases}
\end{equation}
Obviously, the necessary and sufficient conditions and general solution can be given by Theorem \ref{thm1}. Therefore, more details are omitted here.
\end{itemize}
\end{proof}

\section{Conclusion and Further Work}
We have obtained a necessary and sufficient condition for the existence of the general solution to (\ref{mainsystem01}) via Einstein product by using M-P inverse in Theorem \ref{thm1}. We also have presented an expression of  the general solution to (\ref{mainsystem01})) when it is solvable.  Moreover, the general solution to tensor equation (\ref{mainsystem02}) has been considered. Some known results can be viewed as special cases of the one obtained in this paper.

Based on the above conclusion, we next consider the generalization of tensor system:
\[\mathcal{A}_i\ast_{N}\mathcal{X}_i+
\mathcal{Y}_i\ast_{M}\mathcal{B}_i+\mathcal{C}_i\ast_{N}
\mathcal{Z}_i\ast_{M}\mathcal{D}_i+\mathcal{F}_i\ast_{N}
\mathcal{Z}_{i+1}\ast_{M}\mathcal{G}_i=\mathcal{E}_i, i=1,2,\ldots,n,\]
giving the solvable conditions and presenting the general solution. This work is more meaningful, but we need some new methods to solve it.

\end{document}